\documentclass[12pt,dvipdfmx,superscriptaddress,nofootinbib]{revtex4-1}

\usepackage{amsmath,amssymb,amsthm}
\usepackage{graphicx}
\usepackage{bm}
\usepackage{cases}
\usepackage{tikz}
\usepackage{color}
\usepackage{ulem}

\newcommand{\dd}{\mathrm{\bf d}}

\newtheorem{defn}{Definition}
\newtheorem{prop}{Proposition}
\newtheorem{thm}{Theorem}
\newtheorem{lem}{Lemma}

\definecolor{DarkBlue}{rgb}{0,0,0.7} 

\definecolor{DarkRed}{rgb}{0.65,0,0}



\begin{document}

\baselineskip5.5mm
\thispagestyle{empty}

{\baselineskip0pt
\small
\leftline{\baselineskip16pt\sl\vbox to0pt{
               \hbox{\it Division of Particle and Astrophysical Science, Nagoya University}
               \hbox{\it Department of Physics, Rikkyo University} 
                             \vss}}
\rightline{\baselineskip16pt\rm\vbox to20pt{
            {
            \hbox{RUP-20-8}
            }
\vss}}%
}

\author{Masataka~Tsuchiya}\email{tsuchiya.masataka@h.mbox.nagoya-u.ac.jp}
\affiliation{
Division of Particle and Astrophysical Science,
Graduate School of Science, Nagoya University, 
Nagoya 464-8602, Japan
}

\author{Chul-Moon~Yoo}\email{yoo@gravity.phys.nagoya-u.ac.jp}
\affiliation{
Division of Particle and Astrophysical Science,
Graduate School of Science, Nagoya University, 
Nagoya 464-8602, Japan
}

\author{Yasutaka~Koga}\email{koga@rikkyo.ac.jp}
\affiliation{
Department of Physics, Rikkyo University, Toshima,
Tokyo 171-8501, Japan
}

\author{Tomohiro~Harada}\email{harada@rikkyo.ac.jp}
\affiliation{
Department of Physics, Rikkyo University, Toshima,
Tokyo 171-8501, Japan
}

\vskip1cm

\title{Sonic Point and Photon Surface}


\begin{abstract}
\vspace{5mm}
The sonic point/photon surface correspondence is thoroughly investigated in a general setting. 
First, we investigate a sonic point of a transonic steady perfect fluid flow 
in a general stationary spacetime, particularly focusing on the radiation fluid. 
The necessary conditions that the flow must satisfy at a sonic point
 are derived as conditions for the kinematical quantities of the congruence of streamlines
in analogy with the de Laval nozzle equation in fluid mechanics. 
We compare the conditions for a sonic point with the notion of a photon surface, which can be defined as a timelike totally umbilical hypersurface. 
As a result, we find that, for the realization of the sonic point/photon surface correspondence, 
the speed of sound $v_{\rm s}$ must be given by $1/\sqrt{d}$ with $d$ being the spatial dimension of the spacetime.
For the radiation fluid ($v_{\rm s}=1/\sqrt{d}$), we confirm that 
a part of the conditions is shared by the sonic point and the photon surface. 
However, 
in general, a Bondi surface, a set of sonic points, does not necessarily coincide 
with a photon surface. 
Additional assumptions, such as a spatial symmetry, are essential to the realization of the sonic point/photon surface correspondence 
in all known examples.  
\end{abstract}

\maketitle

\pagebreak

\section{introduction}
\label{I}

In recent years, strong relativistic gravitational fields play an important role 
not only in astrophysics and cosmology but also in high energy physics. 
One very useful approach to characterize the gravitational field is 
to consider a probe matter, such as test particles, on the background gravitational field. 
For example, the photon sphere is introduced by a family of unstable circular orbits of test massless particles, 
and it is responsible for the radius of a black hole shadow, the silhouette in optical observation.
Specifically, it is well known that a photon sphere in the Schwarzschild spacetime of mass $M$ 
is given by the cylindrical hypersurface of radius $3M$. 
The extensions of the photon sphere to more general situations and those 
mathematical aspects have 
been actively 
studied~\cite{Cederbaum:2014gva,Cederbaum:2015aha,Cederbaum:2015fra,Shiromizu:2017ego,Yoshino:2017gqv}, 
and the notion of a photon surface 
has been proposed as a generalization of the photon sphere 
to other topologies~\cite{Claudel:2000yi,Gibbons:2016isj}.
Another important example of the probe matter 
is the perfect fluid 
on the background spacetime. 
An astrophysical transonic flow, such as gas accretion and relativistic jet, is responsible for energy transfer. 
As for the accretion, we usually consider a transonic steady flow of a perfect fluid. 
The simplest model would be a Bondi-type flow, 
a steady spherical accretion flow~\cite{10.1093/mnras/112.2.195,Moncrief:1980,Babichev:2005nc,Roy:2007xf,Ahmed:2016cuy,Rodrigues:2016uor}. 
Models describing a rotating gas accretion to form a transonic disk 
have also been considered~\cite{Shakura:1972te,Parev:1995me,Beskin:2002ux,Mukhopadhyay:2008ge}.

Related to the above two examples,  the following theorem 
called the {\it sonic point/photon sphere correspondence} has been given in Ref.~\cite{Koga:2016jjq}:
\begin{thm}[Sonic point/photon sphere correspondence for steady spherical flow~\cite{Koga:2016jjq}]
Consider a transonic steady radial flow of radiation fluid and 
a static observer of the flow in a general static spherically symmetric spacetime 
of arbitrary dimension. 
If the flow is transonic, 
the radius of a sonic point coincides with that of an unstable or marginally stable photon sphere 
of constant radius (see Definition~\ref{def:1} in Sec.~\ref{IIIc} 
for the definition of the stability of a photon sphere).
\label{thm:spps}
\end{thm}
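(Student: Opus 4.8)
The plan is to reduce the steady radial flow to two first-order conservation laws, combine them into a de~Laval--type ``nozzle'' equation, and then read off the correspondence by comparing with the circular-null-geodesic conditions. Write the metric as $ds^2=-f(r)\,dt^2+g(r)^{-1}dr^2+r^2 d\Omega_{d-1}^2$ with $f,g>0$ in the static region, and parametrize the radial four-velocity by the speed $v$ measured by the static observer: $u^t=\Gamma/\sqrt{f}$, $u^r=-\Gamma\sqrt{g}\,v$ with $\Gamma=(1-v^2)^{-1/2}$ (taken inward for accretion). For the radiation fluid, tracelessness of the stress tensor forces $p=\rho/d$, so the squared sound speed is $v_{\rm s}^2=1/d$ and $d\ln h=v_{\rm s}^2\,d\ln n$, where $h$ is the specific enthalpy and $n$ the particle (or entropy) density. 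Steadiness and spherical symmetry reduce $\nabla_\mu(n u^\mu)=0$ and the relativistic Bernoulli relation $h u_t=\mathrm{const}$ to
\begin{equation}
\sqrt{f}\,r^{d-1}\,n\,\Gamma\,v=\mathrm{const},\qquad h\,\Gamma\,\sqrt{f}=\mathrm{const},
\end{equation}
in which $g$ has dropped out.

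Next I would logarithmically differentiate these two relations in $r$ and eliminate $n$ and $h$ using $d\ln h=v_{\rm s}^2\,d\ln n$, obtaining
\begin{equation}
\frac{v^2-v_{\rm s}^2}{1-v^2}\,\frac{d\ln v}{d\ln r}=v_{\rm s}^2(d-1)-\frac{1-v_{\rm s}^2}{2}\,\frac{d\ln f}{d\ln r}.
\end{equation}
At a sonic point $r=r_{\rm s}$ one has $v=v_{\rm s}$, and regularity of the transonic solution (finiteness of $d\ln v/d\ln r$ there) forces the right-hand side to vanish, i.e.\ $r_{\rm s}f'(r_{\rm s})/f(r_{\rm s})=2v_{\rm s}^2(d-1)/(1-v_{\rm s}^2)$. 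Putting $v_{\rm s}^2=1/d$ collapses this to $r_{\rm s}f'(r_{\rm s})=2f(r_{\rm s})$, which is exactly the condition for a circular null geodesic --- equivalently, an extremum of the photon effective potential $\mathcal{V}(r)\equiv f(r)/r^2$. This gives the coincidence of radii.

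For the stability half I would differentiate the nozzle equation once more at $r_{\rm s}$. Expanding both sides to first order in $\ln r-\ln r_{\rm s}$ and writing $y_{\rm s}\equiv(d\ln v/d\ln r)|_{r_{\rm s}}$ gives
\begin{equation}
y_{\rm s}^2=-\frac{(1-v_{\rm s}^2)^2}{4v_{\rm s}^2}\left.\frac{d^2\ln f}{d(\ln r)^2}\right|_{r_{\rm s}}.
\end{equation}
A genuinely transonic flow has a real slope through the sonic point, so $(d^2\ln f/d(\ln r)^2)|_{r_{\rm s}}\le 0$. On the geodesic side, since $\mathcal{V}=f/r^2>0$ and $\mathcal{V}'(r_{\rm s})=0$, one has $\mathrm{sgn}\,\mathcal{V}''(r_{\rm s})=\mathrm{sgn}(d^2\ln f/d(\ln r)^2)|_{r_{\rm s}}$, and by Definition~\ref{def:1} a photon sphere is unstable when $\mathcal{V}''<0$ and marginally stable when $\mathcal{V}''=0$. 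Hence the sonic radius is that of an unstable or marginally stable photon sphere of constant radius, as claimed.

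The step I expect to need the most care is this last link: one must check that ``the flow is transonic'' coincides with the saddle-point (rather than center- or node-type) structure of the phase portrait of the nozzle equation, so that $(d^2\ln f/d(\ln r)^2)|_{r_{\rm s}}\le 0$ is genuinely forced and is matched to the sign convention adopted in Definition~\ref{def:1} --- especially in the marginal case $y_{\rm s}=0$, where one must inspect higher-order terms to confirm a transonic rather than merely tangential solution. One should also verify the thermodynamic input, namely that for the radiation fluid $h\propto\rho^{1/(d+1)}$ and $n\propto\rho^{d/(d+1)}$ keep the two conservation laws functionally independent, so that the elimination producing the nozzle equation is legitimate.
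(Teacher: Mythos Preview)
Your proposal is correct and essentially reproduces the direct coordinate-based proof of the original reference~\cite{Koga:2016jjq} that the paper is citing. The present paper, however, takes a genuinely different route: rather than working with the two explicit conservation laws and the effective potential $\mathcal{V}=f/r^{2}$, it reformulates the problem invariantly in terms of the shear tensor $\bm\sigma$ of the streamline congruence and the trace-free part $\bm\sigma_{\bm\chi}$ of the second fundamental form of a ``proper section'' hypersurface. The paper's sonic-point conditions are $\bm\sigma(\bar{\bm\xi},\bar{\bm\xi})|_{p}=0$ and $\mathcal{L}_{\bar{\bm\eta}}[\bm\sigma(\bar{\bm\xi},\bar{\bm\xi})]|_{p}\ge 0$ (Theorem~\ref{prop:2}), and the spherical symmetry enters only through Lemma~\ref{lem:2}, which upgrades the vanishing of a single component of $\bm\sigma_{\bm\chi}$ to the full totally-umbilical condition. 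Your nozzle equation is the coordinate specialization of the paper's Eq.~\eqref{de laval}, and your $y_{\rm s}^{2}\ge 0$ is the coordinate incarnation of the inequality~\eqref{con2:rad}. What your approach buys is brevity and a one-line explanation of why $v_{\rm s}^{2}=1/d$ is the distinguished value; what the paper's approach buys is an explanation of \emph{why} extra symmetry is needed at all --- in general $\bm\sigma(\bar{\bm\xi},\bar{\bm\xi})=0$ is strictly weaker than $\bm\sigma_{\bm\chi}=0$, and it is Lemma~\ref{lem:2}, not the fluid equations, that closes the gap.

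One point to tighten: Definition~\ref{def:1} characterizes instability by the sign of $R(\bar{\bm m},\bm k,\bar{\bm m},\bm k)$, not by the sign of $\mathcal{V}''$. For your metric these agree (compute $R_{trtr}$ and use the null condition on $\bm k$), but you should state that identification explicitly rather than invoke Definition~\ref{def:1} as if it were already the effective-potential criterion. Your own caveat about the marginal case $y_{\rm s}=0$ is well placed; the paper handles this through the $C^{2}$ hypothesis and the saddle-point analysis in Appendix~\ref{app:phase}.
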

Theorem~\ref{thm:spps} includes the results reported in Refs.~\cite{Mach:2013gia,Ficek:2015eya}, 
where some specific static spherically symmetric spacetimes are considered. 
The universality of the coincidence of the photon sphere and the Bondi surface, the set of the sonic points, 
is pointed out by M.~Cveti\v{c}, G.W.~Gibbons and C.N.~Pope~\cite{Cvetic:2016bxi} in 2016 
independently of Ref.~\cite{Koga:2016jjq}.
After that, two of the present authors extended the correspondence 
to a rotating thin disk model in the same geometry in Ref.~\cite{Koga:2018ybs}, and 
spatially planar and hyperbolic cases are considered in Ref.~\cite{Koga:2019teu}.
The result in Ref.~\cite{Koga:2019teu} shows that the spatial topology no longer matters, and therefore 
we call the coincidence {\it sonic point/photon surface correspondence}.

It is remarkable that we can also define the photon surface 
by using the trace-free part of the second fundamental form as follows:
\begin{thm}[\cite{Claudel:2000yi,Perlick:2005jn,Koga:2019uqd}]
A timelike hypersurface $S$ immersed in a spacetime $(M,\bm g)$ of arbitrary dimension is a photon surface 
if and only if the hypersurface is totally umbilical, that is, 
for every $p\in S$ and any vectors $\bm X,\bm Y\in T_{p}S$,
\begin{equation}
\bm\sigma_{\bm \chi}(\bm X,\bm Y)=0,
\label{umb3}
\end{equation}
where $\bm\sigma_{\bm \chi}$ is the trace-free part of the second fundamental form $\bm\chi$ for $S$.
\label{def:2}
\end{thm}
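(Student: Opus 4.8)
The plan is to establish the equivalence by reducing it to the original Claudel--Heusler--Walker characterization: $S$ is a photon surface precisely when every null geodesic of $(M,\bm g)$ that is tangent to $S$ at one of its points remains inside $S$. The only tool needed is the Gauss formula along $S$: writing $\bm n$ for the unit spacelike normal, $D$ for the Levi-Civita connection of the induced (Lorentzian) metric $\bm h$ on $S$, and $\bm\chi$ for the scalar second fundamental form, one has, for all $\bm X,\bm Y$ tangent to $S$,
\[ \nabla_{\bm X}\bm Y = D_{\bm X}\bm Y + \bm\chi(\bm X,\bm Y)\,\bm n . \]
I would treat the two implications separately.

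\textit{Photon surface $\Rightarrow$ totally umbilical.} Fix $p\in S$ and an arbitrary $\bm h$-null vector $\bm k\in T_pS$; since $S$ is timelike, the induced metric is Lorentzian and such vectors fill the whole null cone of $T_pS$. Let $\gamma$ be the ambient null geodesic with $\gamma(0)=p$ and $\dot\gamma(0)=\bm k$. By the photon-surface property $\gamma$ stays in $S$, so $\dot\gamma$ remains tangent to $S$; substituting $\dot\gamma$ into the Gauss formula gives $0=\nabla_{\dot\gamma}\dot\gamma=D_{\dot\gamma}\dot\gamma+\bm\chi(\dot\gamma,\dot\gamma)\,\bm n$, whose normal part evaluated at $p$ yields $\bm\chi(\bm k,\bm k)=0$. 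Thus $\bm\chi|_p$ is a symmetric bilinear form on $(T_pS,\bm h|_p)$ vanishing on the entire null cone. I would then invoke, and prove separately, the elementary lemma that a symmetric bilinear form on a Lorentzian vector space of dimension $\ge 2$ which vanishes on all null vectors must be a scalar multiple of the metric; this forces $\bm\chi|_p=c(p)\,\bm h|_p$ to be pure trace, so $\bm\sigma_{\bm\chi}$ vanishes at $p$, and since $p$ is arbitrary $S$ is totally umbilical.

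\textit{Totally umbilical $\Rightarrow$ photon surface.} Assume $\bm\sigma_{\bm\chi}=0$, so that $\bm\chi=c\,\bm h$ for some function $c$ on $S$. Given $p\in S$ and an $\bm h$-null $\bm k\in T_pS$, let $\sigma$ be the affinely parametrized geodesic of $(S,\bm h)$ with $\dot\sigma(0)=\bm k$. Since $D_{\dot\sigma}\dot\sigma=0$, the function $\bm h(\dot\sigma,\dot\sigma)$ is constant along $\sigma$ and hence equals its initial value $0$; feeding this back into the Gauss formula, $\nabla_{\dot\sigma}\dot\sigma=\bm\chi(\dot\sigma,\dot\sigma)\,\bm n=c\,\bm h(\dot\sigma,\dot\sigma)\,\bm n=0$, so $\sigma$ is an affinely parametrized geodesic of $(M,\bm g)$, and it is null because $\bm g(\dot\sigma,\dot\sigma)=\bm h(\dot\sigma,\dot\sigma)=0$. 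By uniqueness of geodesics with prescribed initial data, $\sigma$ coincides with the ambient null geodesic through $p$ with velocity $\bm k$, which therefore lies in $S$; this is exactly the defining photon-surface property.

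I expect the only genuine obstacle to be the linear-algebra lemma on the null cone --- short, but tacitly requiring $\dim M\ge 3$ so that $S$ is at least two-dimensional --- together with the minor bookkeeping point that the proportionality factor $c$ need not be constant over $S$; this is harmless in the second implication since $c$ is always multiplied there by the identically vanishing $\bm h(\dot\sigma,\dot\sigma)$. Everything else is a direct application of the Gauss--Weingarten equations and the existence and uniqueness theorem for geodesics.
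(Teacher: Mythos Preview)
Your argument is correct and is essentially the standard proof given in the cited references (Claudel--Virbhadra--Ellis, Perlick). Note, however, that the present paper does not actually supply a proof of this theorem: it is stated in the introduction with citations and then used as a known result, so there is no ``paper's own proof'' to compare against. Your Gauss-formula argument in both directions, together with the linear-algebra lemma that a symmetric bilinear form on a Lorentzian space vanishing on the null cone is pure trace, reproduces the original literature proof faithfully; the caveats you flag (the lemma needs $\dim S\ge 2$, and the umbilical factor $c$ may vary over $S$) are exactly the ones that matter and are handled correctly.
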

Theorem~\ref{thm:spps} indicates that the shear tensor for the congruence of 
streamlines may specify the locus of the sonic point as well as 
the photon surface. 
As pointed out in Ref.~\cite{Cvetic:2016bxi}, this idea is compatible with the insight in the de Laval nozzle model, 
in which the sonic point is located at the throat, where the cross sectional area is minimal. 

One of our purposes in this paper is to reveal the 
role of the shear tensor for the congruence of streamlines 
in specifying the locus of the sonic point. 
We start this paper with the existence of a timelike Killing vector field. 
A stationary flow of perfect fluid and a fiducial observer are defined 
in terms of the timelike Killing vector field. The speed of the fluid flow is 
defined by the fluid velocity relative to the fiducial observer.
We derive the necessary conditions the flow must satisfy at the sonic point in this model. 
The conditions are obtained as an equation and an inequality which are conventionally 
derived 
through the phase space analysis (see, e.g. Ref.~\cite{Chaverra:2015bya}). 
It is remarkable that the necessary conditions for the radiation fluid 
reduce to the conditions imposed on the shear tensor for the congruence of streamlines. 
Therefore, 
we consider that the shear tensor is essential in the sonic point/photon surface correspondence, 
and revisit the sonic point/photon surface correspondence in terms 
of the necessary conditions imposed on 
a timelike hypersurface associated with the flow.

This paper is organized as follows.
In Sec.~\ref{II}, we introduce the notion of the steady flow, and 
rewrite the basic equations of a steady perfect fluid flow 
to derive the equation for the congruence of the streamlines 
similarly to the de Laval nozzle system. 
We evaluate the de Laval nozzle-like equation 
at a sonic point, 
and 
obtain one equation and one inequality 
that the flow must satisfy at the sonic point. 
In particular, for the radiation fluid, the necessary conditions are 
rewritten in terms of the time-time component of the shear tensor for the congruence of the streamlines.
In Sec.~\ref{III}, we introduce the notion of the proper 
section of a congruence of streamlines as the corresponding hypersurface to the flow at each point on a stream line.
It is actually the relativistic extension of the section of the de Laval nozzle model. 
We can then rewrite the necessary conditions for the flow
in terms of  
the trace-free part of the second fundamental form for the proper section. 
Finally, in Sec.~\ref{IIIc}, we apply our observations of the steady perfect fluid flow 
to the sonic point/photon surface correspondence. 
Sec.~\ref{IV} is devoted to 
a summary.

We use geometrized units in which both the speed of light $c$ and Newton's
gravitational constant $G$ are one.

\section{Sonic point of steady radiation fluid flow equipped with Killing observers}
\label{II}

In the systems~\cite{Koga:2016jjq,Koga:2018ybs,Koga:2019teu} 
where the sonic point/photon surface correspondence was reported, 
a static observer or a co-rotating observer with the flow was considered.
One can understand these choices of the observer are based on timelike isometries of the spacetime, 
i.e. the observer is chosen so that the world line will be the orbit of a timelike Killing vector field.
From this view point, 
we begin this paper with a definition of a 
steady perfect fluid flow 
in a general stationary spacetime of arbitrary dimension. 
We suppose that the spacetime is smooth enough, and 
do not consider any discontinuity of fluid variables. 
Specifically, we
simply assume at least $C^{2}$ differentiability 
for fluid variables in the vicinity of the point under consideration. 
In this section, we derive two necessary conditions that the flow must satisfy at the sonic point. 
Those conditions will be imposed on the time-time component of the shear tensor for the congruence of streamlines if we consider the radiation fluid.

\subsection{Steady perfect fluid flow}
\label{II1}

Let us consider a perfect fluid flow in a $(d+1)$-dimensional spacetime $(M,\bm g)$ 
described by thermodynamic variables $M\to\mathbb{R}_{>0}$: number density $n$, specific enthalpy $h$, pressure $P$, 
specific entropy $s$ and temperature $T$, together with fluid $(d+1)$-velocity $\bm u$ 
which satisfies the normalization condition
\begin{equation}
\bm u\cdot \bm u=-1,
\label{norm}
\end{equation}
where the centered dot ``~$\,\cdot\,$~'' denotes the inner product with respect to $\bm g$. 
The fluid obeys the first law of thermodynamics, 
the continuity equation, the energy-momentum conservation law and the equation of state:
\begin{subequations}
\begin{gather}
\dd h=T\dd s+n^{-1}\dd P, \label{1st}  \\
\nabla\cdot(n\bm u)=0, \label{cont} \\ 
\nabla\cdot\left(nh\bm u\otimes \bm u+P\bm g^{-1}\right)=0, \label{energy} \\
h=h(P,s). \label{eos}
\end{gather}
\end{subequations}
It is well known that Eqs.~\eqref{norm}, \eqref{1st}, \eqref{cont} and Eq.~\eqref{energy} contracted with $\bm u$ 
lead to
\begin{equation}
{\cal L}_{\bm u}s=0.
\label{ise}
\end{equation}
The functional form of the equation of state~\eqref{eos} needs to be restricted in this paper 
so that Eq.~\eqref{eos} will properly define the speed of sound $v_{\rm s}:M\to(0,1)$ by 
\begin{equation}
v_{\rm s}^{2}:=\left(\frac{\partial\ln h}{\partial\ln n}\right)_{s}.
\label{vs}
\end{equation}
The definition~\eqref{vs} of the speed of sound is also expressed as
$v_{\rm
s}^{2}=(\partial P/\partial\rho)_{s}$ using the energy density $\rho$,
and therefore we have supposed that Eq.~\eqref{eos} gives a strictly
monotonically increasing function $P(\rho,s)$ 
of $\rho$ if $s$ is fixed.
The $d$-velocity of the flow $v_{\rm obs}:M\to[0,1)$ with respect to an observer $\bm u_{\rm obs}$ is defined by the orthogonal decomposition
\begin{equation}
\bm u=\frac{1}{\sqrt{1-v_{\rm obs}^{2}}}\left(\bm u_{\rm obs}+v_{\rm obs}\bar{\bm w}\right),
\label{decomp:obs}
\end{equation}
where $\bm u_{\rm obs}$ and $\bar{\bm w}$ are the vector fields on $M$ satisfying
\begin{subequations}
\begin{eqnarray}
\bm u_{\rm obs}\cdot\bm u_{\rm obs}&=&-1,\\
\bar{\bm w}\cdot\bar{\bm w}&=&1,\\
\bm u_{\rm obs}\cdot\bar{\bm w}&=&0.
\end{eqnarray}
\end{subequations}
The integral curves of $\bm u_{\rm obs}$ are regarded as the world lines of the observers.

Throughout this paper, we assume the existence of a timelike Killing vector field $\bm \xi$ 
 in a neighborhood $U$ of a point $p\in M$ under consideration, and the observer $\bm u_{\rm obs}$ is supposed to satisfy
\begin{equation}
(\bm u_{\rm obs}-\bar{\bm\xi})|_{p}=0,
\end{equation}
where $\bar{\bm\xi}:=\bm\xi/\sqrt{|\bm\xi\cdot\bm\xi|}$ is defined in the neighborhood $U\ni p$.
Then we define the steady perfect fluid flow as follows: 
\begin{defn}[Steady perfect fluid flow]
We say that the perfect fluid flow is steady if
there exists a timelike Killing vector field $\bm\xi$ on $U$ 
which, at every point $p\in U$ on the fluid flow, 
satisfies the following equations:
\begin{subequations}
\begin{eqnarray}
{\cal L}_{\bm\xi}P={\cal L}_{\bm\xi}s&=&0,\label{steady1}\\
{\cal L}_{\bm\xi}\bm u&=&0. \label{steady2}
\end{eqnarray}
\end{subequations}
\end{defn}

Our setting 
is consistent with most models of accretion flows onto gravitational sources, 
where  one supposes that the flow is in steady state, 
and the observer vector field $\bm u_{\rm obs}$ is tangent to a timelike Killing vector field on 
the background spacetime.
Note that $\bm\xi$ may not be given as $\bm\xi=\bm\partial_{t}$ 
in the standard coordinate system of a stationary spacetime.
For example, we will take 
$\bm\xi=\bm\partial_{t}+\Omega\bm\partial_{\phi}$ in Sec.~\ref{IIIc}.

Let us focus on a single point $p$.
The Killing vector field $\bm\xi$ obeys
\begin{subequations}
\begin{gather}
{\cal L}_{\bm\xi}\bm g=0,\label{killing}\\
\bm\xi\cdot\bm\xi\,|_{U}<0.
\end{gather}
\end{subequations}
The stationarity~\eqref{steady1} of $P$ and $s$ immediately leads to
\begin{equation}
{\cal L}_{\bm\xi}n={\cal L}_{\bm\xi}h={\cal L}_{\bm\xi}T=0.
\end{equation}
In the same way as Eq.~\eqref{decomp:obs}, we define the speed of the flow $v:U\to[0,1)$ 
by the orthogonal decomposition of $\bm u$ as follows:
\begin{equation}
\bm u=\frac{1}{\sqrt{1-v^{2}}}\left(\bar{\bm\xi}+v\bar{\bm\eta}\right),
\label{decomp}
\end{equation}
where $\bar{\bm\eta}$ is a spacelike unit vector field on $U$ orthogonal to $\bar{\bm\xi}$, that is,  
\begin{subequations}
\begin{eqnarray}
\bar{\bm\xi}\cdot\bar{\bm\xi}&=&-1,\\
\bar{\bm\eta}\cdot\bar{\bm\eta}&=&1,\label{norm:eta}\\
\bar{\bm\xi}\cdot\bar{\bm\eta}&=&0.
\end{eqnarray}
\end{subequations}
Taking the inner product of Eq.~\eqref{decomp} with $\bar{\bm\xi}$ or $\bar{\bm\eta}$, we obtain the following expressions of $v$:
\begin{equation}
v^{2}~=~1-\frac{1}{(\bm u\cdot \bar{\bm \xi}\,)^{2}}~=~\frac{(\bm u\cdot\bar{\bm\eta})^{2}}{1+(\bm u\cdot\bar{\bm\eta})^{2}}. 
\label{v}
\end{equation}
We can also show the invariance of $\bar{\bm\xi}$, $\bar{\bm\eta}$ and $v$ under ${\cal L}_{\bm\xi}$ from Eqs.~\eqref{killing}, \eqref{steady1} and \eqref{steady2}.
The Killing equation~\eqref{killing} immediately yields
\begin{equation}
{\cal L}_{\bm\xi}\bar{\bm\xi}=0.
\end{equation}
From the first expression of $v^{2}$ in Eq.~\eqref{v}, we obtain 
\begin{equation}
{\cal L}_{\bm\xi}v^{2}=0.
\end{equation}
Acting ${\cal L}_{\bm\xi}$ on Eq.~\eqref{decomp} leads to
\begin{equation}
v{\cal L}_{\bm\xi}\bar{\bm\eta}=0.
\end{equation}
Here, we note that $\bar{\bm\eta}$ is not uniquely given from Eq.~\eqref{decomp} for $v=0$. 
However, by choosing an appropriate $\bar{\bm\eta}|_{\{v=0\}}$, we always obtain $\bar{\bm\eta}$ such that
\begin{equation}
{\cal L}_{\bm\xi}\bar{\bm\eta}=0.
\label{frob}
\end{equation}
Eq.~\eqref{frob} is useful not only 
for rewriting the basic equations, but it is also essential in the description of the phase space analysis in Appendix~\ref{app:phase}, which requires a certain coordinate system of 2-dimension on the foliation of $U$.

\subsection{Congruence of streamlines}

In the previous subsection, we introduced the spacelike vector field $\bar{\bm\eta}$ orthogonal to $\bar{\bm\xi}$.
Actually, the integral curves of $\bar{\bm\eta}$ are what we usually call the streamlines, and we show the locus of the sonic point is determined by the shear tensor of the streamlines in this paper.
As the first step, we introduce the tensor field
\begin{equation}
\bm B:=\nabla\otimes\bar{\bm\eta}^{\flat}=\nabla_{\mu}\bar{\eta}_{\nu}\,\dd x^{\mu}\otimes\dd x^{\nu}
\label{B}
\end{equation}
describing the congruence of streamlines, where the flat ``~$^{\flat}$~'' denotes the covariant dual of vector fields with respect to $\bm g$.
$\bm B$ obeys the following equations which will be used throughout this paper.
Acting $\nabla_{\mu}$ on the normalization condition~\eqref{norm:eta} of  $\bar{\bm\eta}$ gives
\begin{equation}
\bm B(\,\cdot\,,\bar{\bm\eta})=0.
\label{B:1}
\end{equation}
Eq.~\eqref{frob} is deformed to
\begin{equation}
\bm B(\bm\xi,\,\cdot\,)-\nabla_{\bar{\bm\eta}}\bm\xi=0.
\label{B:2}
\end{equation}
Applying the Killing equation $\nabla_{\mu}\xi_{\nu}+\nabla_{\nu}\xi_{\mu}=0$ to the second term of Eq.~\eqref{B:2}, 
we obtain 
\begin{equation}
-\nabla_{\bar{\bm\eta}}\bm\xi=(\nabla_{\mu}\xi^{\nu})\bar{\eta}_{\nu}=-\bm B(\,\cdot\,,\bm\xi).
\end{equation}
Therefore, Eq.~\eqref{B:2} results in
\begin{equation}
\bm B(\bar{\bm\xi},\,\cdot\,)-\bm B(\,\cdot\,,\bar{\bm\xi})=0.
\label{B:3}
\end{equation}

In the following, we shall perform the usual unique decomposition of $\bm B$.
Because the streamlines are not necessarily geodesics, $\bm B$ includes the acceleration
defined by 
\begin{equation}
\bm a:=\bm B(\bar{\bm\eta},\,\cdot\,)^{\sharp}=\nabla_{\bar{\bm\eta}}\bar{\bm\eta},
\label{a}
\end{equation}
where the sharp ``\,$\sharp$\,'' denotes the covariant dual of covector fields with respect to $\bm g$.
Taking the inner product of Eq.~\eqref{B:1} and \eqref{B:3} with $\bar{\bm\eta}$, respectively, 
we obtain the orthogonal relations of $\bm a$ with $\bar{\bm\xi}$ and $\bar{\bm\eta}$:
\begin{subequations}
\begin{eqnarray}
\bm a\cdot\bar{\bm\xi}&=&0,\label{a xi}\\
\bm a\cdot\bar{\bm\eta}&=&0.
\label{a eta}
\end{eqnarray}
\end{subequations}
Eq.~\eqref{a xi} states that $\bm a$ is orthogonal to a timelike vector field, i.e. $\bm a$ is spacelike.
Let us denote the projection of tensor fields with respect to $\bar{\bm\eta}$ by $\perp$.
The projection tensor is given by
\begin{equation}
\bm g^{\perp}:=\bm g-\bar{\bm\eta}^{\flat}\otimes\bar{\bm\eta}^{\flat}.
\end{equation}
Taking Eqs.~\eqref{B:1} and \eqref{a} into account, we obtain the projection of $\bm B$ as 
\begin{equation}
\bm B^{\perp}:=\bm B-\bar{\bm\eta}^{\flat}\otimes\bm a^{\flat}.
\label{B:perp1}
\end{equation}
$\bm B^{\perp}$ consists of the trace $\Theta$, the trace-free symmetric part $\bm\sigma$ and the 
anti-symmetric part $\bm\omega$:
\begin{equation}
\bm B^{\perp}=\frac{\Theta}{d}\,\bm g^{\perp}+\bm\sigma+\bm\omega.
\label{B:perp2}
\end{equation}
$\Theta$, $\bm\sigma$ and $\bm\omega$ are called the expansion scalar, the shear tensor and the vorticity tensor, respectively.
Using Eq.~\eqref{a eta}, we rewrite $\Theta:={\rm tr}_{\bm g^{\perp}}\bm B^{\perp}$ as 
\begin{equation}
\Theta={\rm tr}_{\bm g}\bm B=\nabla\cdot\bar{\bm\eta}.
\label{Theta}
\end{equation}

\subsection{Sonic point, shear tensor and radiation fluid}
\label{IIc}

From the energy-momentum conservation law~\eqref{energy} contracted with $\bar{\bm\eta}$, 
we can derive the de Laval nozzle-like equation for a steady perfect fluid flow:
\begin{equation}
\left(v_{\rm s}^{2}-v^{2}\right){\cal L}_{\bar{\bm\eta}}\left(\ln\frac{|v|}{\sqrt{1-v^{2}}}\right)+\Phi=0,
\label{de laval}
\end{equation}
where 
\begin{equation}
\Phi=\bm B(\bar{\bm\xi},\bar{\bm\xi})+v_{\rm s}^{2}~\Theta.
\label{Phi0}
\end{equation}
We also use Eqs.~\eqref{1st}, \eqref{cont} and \eqref{eos} in the derivation of Eq.~\eqref{de laval}.
Readers may refer to Appendix~\ref{app:de laval} for the details of the derivation of Eq.~\eqref{de laval}.

The point $p$ is called the sonic point if the following equation is satisfied:
\begin{equation}
\left(v_{\rm s}^{2}-v^{2}\right)\Big|_{p}=0.
\label{sonic0}
\end{equation}
From the de Laval nozzle-like equation~\eqref{de laval}, we obtain two
necessary conditions that the steady perfect fluid flow must satisfy  at the sonic point as is shown 
in Theorem~\ref{prop:con} below.

\begin{thm}
If $p$ is a sonic point, the following two conditions must be satisfied:
\begin{subequations}
\begin{eqnarray}
\Phi|_{p}&=&0,
\label{con1}\\
{\cal L}_{\bar{\bm\eta}}\Phi|_{p}&\ge&-\frac{1}{2}v^{-2}(1-v^{2})^{-1}{\cal L}_{\bar{\bm\eta}}\left(v_{\rm s}^{2}-v^{2}\right){\cal L}_{\bar{\bm\eta}}v_{\rm s}^{2}\Big|_{p}.
\label{con2}
\end{eqnarray}
\end{subequations}
\label{prop:con}
\end{thm}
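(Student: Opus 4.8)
The plan is to extract both conditions from the de Laval nozzle-like equation~\eqref{de laval}, which holds at every point of $U$ where the flow speed is nonzero. Since $p$ is a sonic point, $v|_{p}=v_{\rm s}|_{p}\in(0,1)$, so there is a neighbourhood $U'\ni p$ on which $0<v<1$; there the orthogonal decomposition~\eqref{decomp} is non-degenerate, so $\bar{\bm\eta}$ and hence $\bm B$, $\Theta$, $\Phi$ inherit the regularity permitted by the standing $C^{2}$ hypothesis on the fluid variables, while $\ln(|v|/\sqrt{1-v^{2}})$ is $C^{2}$ on $U'$. Condition~\eqref{con1} is then immediate: evaluating~\eqref{de laval} at $p$ and using~\eqref{sonic0}, the prefactor $v_{\rm s}^{2}-v^{2}$ kills the first term (its other factor ${\cal L}_{\bar{\bm\eta}}(\ln(|v|/\sqrt{1-v^{2}}))$ being finite at $p$), leaving $\Phi|_{p}=0$.

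For~\eqref{con2} I would first record the elementary identity
\[
{\cal L}_{\bar{\bm\eta}}\left(\ln\frac{|v|}{\sqrt{1-v^{2}}}\right)=\frac{{\cal L}_{\bar{\bm\eta}}v^{2}}{2v^{2}(1-v^{2})},
\]
so that~\eqref{de laval} reads $(v_{\rm s}^{2}-v^{2})\,{\cal L}_{\bar{\bm\eta}}v^{2}\big/\bigl(2v^{2}(1-v^{2})\bigr)+\Phi=0$ on $U'$, and then apply ${\cal L}_{\bar{\bm\eta}}$ to this equation. The term carrying a factor $v_{\rm s}^{2}-v^{2}$ (which multiplies a second $\bar{\bm\eta}$-derivative that is finite on $U'$) vanishes at $p$ by~\eqref{sonic0}, and at $p$ one is left with
\[
{\cal L}_{\bar{\bm\eta}}(v_{\rm s}^{2}-v^{2})\,\frac{{\cal L}_{\bar{\bm\eta}}v^{2}}{2v^{2}(1-v^{2})}+{\cal L}_{\bar{\bm\eta}}\Phi=0.
\]
Substituting ${\cal L}_{\bar{\bm\eta}}v^{2}={\cal L}_{\bar{\bm\eta}}v_{\rm s}^{2}-{\cal L}_{\bar{\bm\eta}}(v_{\rm s}^{2}-v^{2})$ and rearranging completes the square: the cross term reproduces the right-hand side of~\eqref{con2}, while the remaining term is $\bigl({\cal L}_{\bar{\bm\eta}}(v_{\rm s}^{2}-v^{2})\bigr)^{2}\big/\bigl(2v^{2}(1-v^{2})\bigr)\ge 0$ because $0<v<1$ at $p$. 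This yields~\eqref{con2} — in fact as the stronger equality that retains this perfect-square term.

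The computation is essentially bookkeeping; the one genuine point of care is the regularity needed to differentiate~\eqref{de laval} once along $\bar{\bm\eta}$ and to evaluate every term at $p$ without a $0\cdot\infty$ ambiguity, which is precisely where $v|_{p}>0$ (non-degeneracy of~\eqref{decomp}) and the $C^{2}$ hypothesis on the fluid data enter. The only step beyond routine manipulation is recognising that the term surviving the differentiation is a perfect square over the positive quantity $2v^{2}(1-v^{2})$, which is what turns the resulting identity into the inequality~\eqref{con2}.
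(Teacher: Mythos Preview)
Your proof is correct and follows essentially the same route as the paper: evaluate~\eqref{de laval} at $p$ to get~\eqref{con1}, then differentiate~\eqref{de laval} once along $\bar{\bm\eta}$, use the identity ${\cal L}_{\bar{\bm\eta}}\ln(|v|/\sqrt{1-v^{2}})={\cal L}_{\bar{\bm\eta}}v^{2}/[2v^{2}(1-v^{2})]$, and split ${\cal L}_{\bar{\bm\eta}}v^{2}$ to isolate the non-negative perfect square $[{\cal L}_{\bar{\bm\eta}}(v_{\rm s}^{2}-v^{2})]^{2}/[2v^{2}(1-v^{2})]$. Your added remarks on regularity (the $C^{2}$ hypothesis and $0<v|_{p}<1$ ruling out a $0\cdot\infty$ ambiguity) are a welcome clarification that the paper leaves implicit.
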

\begin{proof}
The first term in the de Laval nozzle-like equation~\eqref{de laval} vanishes at $p$.
Simultaneously the second term in Eq.~\eqref{de laval} must also vanish, and we obtain the first condition~\eqref{con1}.
In addition, acting ${\cal L}_{\bar{\bm\eta}}$ on both sides of Eq.~\eqref{de laval}, we get the following equation at $p$:
\begin{equation}
\left\{{\cal L}_{\bar{\bm\eta}}(v_{\rm s}^{2}-v^{2}){\cal L}_{\bar{\bm\eta}}\left(\ln\frac{|v|}{\sqrt{1-v^{2}}}\right)+{\cal L}_{\bar{\bm\eta}}\Phi\right\}\Big|_{p}=0,
\end{equation}
whence
\begin{equation}
\frac{1}{2}v^{-2}\left(1-v^{2}\right)^{-1}\left[{\cal L}_{\bar{\bm\eta}}\left(v_{\rm s}^{2}-v^{2}\right)\right]^{2}\Big|_{p}=\left\{{\cal L}_{\bar{\bm\eta}}\Phi+\frac{1}{2}v^{-2}\left(1-v^{2}\right)^{-1}{\cal L}_{\bar{\bm\eta}}\left(v_{\rm s}^{2}-v^{2}\right){\cal L}_{\bar{\bm\eta}}v_{\rm s}^{2}\right\}\Big|_{p}.
\end{equation}
Because the left-hand side is non-negative, the right-hand side must also be non-negative.
Therefore, we obtain the second condition~\eqref{con2}.
\end{proof}
The same conditions can be derived through the phase space analysis (see Appendix~\ref{app:phase}).
Note that the flow requires only $C^{1}$ smoothness at $p$ for Eq.~\eqref{con1}, while the flow requires $C^{2}$ smoothness at $p$ for the inequality~\eqref{con2}. 

It is remarkable that the function $\Phi$ which has been defined in Eq.~\eqref{Phi0} reduces to the $(\bar{\bm\xi},\bar{\bm\xi})$ component of the shear tensor $\bm\sigma$ 
at a point where the speed of sound is given by $1/\sqrt{d}$ with $d$ being the spatial dimension of the spacetime.
Combining this fact with Theorem~\ref{prop:con}, we arrive at the following propositions mentioning the importance of the shear tensor in specifying the locus of the sonic point:

\begin{prop}
Suppose that $p$ is a sonic point, 
and that the expansion $\Theta$ does not vanish at $p$.
Then the speed of sound at $p$ is given by $1/\sqrt{d}$ with $d$ being the spatial dimension of the spacetime if and only if the $(\bar{\bm\xi},\bar{\bm\xi})$ component of the shear tensor $\bm\sigma$ vanishes at $p$.
\label{prop:5}
\end{prop}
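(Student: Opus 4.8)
The plan is to reduce the statement to the single algebraic identity
\[
\Phi=\bm\sigma(\bar{\bm\xi},\bar{\bm\xi})+\left(v_{\rm s}^{2}-\frac{1}{d}\right)\Theta ,
\]
valid at every point of $U$, and then combine it with the sonic-point condition $\Phi|_{p}=0$ supplied by Theorem~\ref{prop:con}.

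First I would compute $\bm B(\bar{\bm\xi},\bar{\bm\xi})$ in terms of the irreducible pieces of $\bm B^{\perp}$. Contracting \eqref{B:perp1} on both slots with $\bar{\bm\xi}$ and using $\bar{\bm\xi}\cdot\bar{\bm\eta}=0$ (equivalently \eqref{a xi}), the $\bar{\bm\eta}^{\flat}\otimes\bm a^{\flat}$ term drops out, so $\bm B^{\perp}(\bar{\bm\xi},\bar{\bm\xi})=\bm B(\bar{\bm\xi},\bar{\bm\xi})$. Next I would insert the decomposition \eqref{B:perp2}: the vorticity term contributes nothing because $\bm\omega$ is antisymmetric, and $\bm g^{\perp}(\bar{\bm\xi},\bar{\bm\xi})=\bm g(\bar{\bm\xi},\bar{\bm\xi})-(\bar{\bm\eta}\cdot\bar{\bm\xi})^{2}=-1$, so that
\[
\bm B(\bar{\bm\xi},\bar{\bm\xi})=-\frac{\Theta}{d}+\bm\sigma(\bar{\bm\xi},\bar{\bm\xi}).
\]
Substituting this into the definition \eqref{Phi0} of $\Phi$ yields the identity above, which also recovers the remark preceding the proposition that $\Phi$ coincides with $\bm\sigma(\bar{\bm\xi},\bar{\bm\xi})$ exactly where $v_{\rm s}=1/\sqrt d$.

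The conclusion is then immediate. At the sonic point $p$, Theorem~\ref{prop:con} gives $\Phi|_{p}=0$, hence $\bm\sigma(\bar{\bm\xi},\bar{\bm\xi})|_{p}=\bigl(\tfrac1d-v_{\rm s}^{2}\bigr)\Theta|_{p}$. If $v_{\rm s}|_{p}=1/\sqrt d$ the right-hand side vanishes. Conversely, if $\bm\sigma(\bar{\bm\xi},\bar{\bm\xi})|_{p}=0$ then $\bigl(\tfrac1d-v_{\rm s}^{2}\bigr)\Theta|_{p}=0$, and since $\Theta|_{p}\neq0$ by hypothesis we get $v_{\rm s}^{2}|_{p}=1/d$, i.e. $v_{\rm s}|_{p}=1/\sqrt d$ because $v_{\rm s}\in(0,1)$. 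There is no genuine obstacle here; the only point requiring care is the bookkeeping in the contraction — checking that both the acceleration piece and the vorticity piece drop out on the diagonal against $\bar{\bm\xi}$ — and noticing that the hypothesis $\Theta|_{p}\neq0$ is needed only for the implication $\bm\sigma(\bar{\bm\xi},\bar{\bm\xi})|_{p}=0\Rightarrow v_{\rm s}|_{p}=1/\sqrt d$, the reverse implication holding regardless.
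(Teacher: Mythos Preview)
Your proof is correct and follows essentially the same approach as the paper: both arguments rest on the identity $\Phi=\bm\sigma(\bar{\bm\xi},\bar{\bm\xi})+(v_{\rm s}^{2}-\tfrac{1}{d})\Theta$ together with $\Phi|_{p}=0$ from Theorem~\ref{prop:con}. The paper states this identity as a remark just before the proposition and then uses it directly, whereas you spell out the contraction of \eqref{B:perp1} and \eqref{B:perp2} against $\bar{\bm\xi}$ to derive it; this extra bookkeeping is sound and the logical structure of the two directions is identical.
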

\begin{proof}
($\Longrightarrow$)
Assume that $v_{\rm s}|_{p}=1/\sqrt{d}$.
Substituting $v_{\rm s}^{2}=1/d$ into the first condition~\eqref{con1} in Theorem~\ref{prop:con}, we obtain $\Phi|_{p}=\bm\sigma(\bar{\bm\xi},\bar{\bm\xi})|_{p}=0$.

($\Longleftarrow$)
Assume that $\bm\sigma(\bar{\bm\xi},\bar{\bm\xi})|_{p}=0$.
Combining $\bm\sigma(\bar{\bm\xi},\bar{\bm\xi})|_{p}=0$ with the first condition~\eqref{con1} in Theorem~\ref{prop:con}, we obtain
\begin{equation}
\left(v_{\rm s}^{2}-\frac{1}{d}\right)\Theta~\Big|_{p}=0.
\end{equation}
We have assumed $\Theta|_{p}\neq0$, and therefore the flow must satisfy $v_{\rm s}|_{p}=1/\sqrt{d}$.
\end{proof}

\begin{thm}
Consider the radiation fluid. 
The necessary conditions that the steady radiation fluid flow must satisfy at a sonic point $p$ reduce to
\begin{subequations}
\begin{eqnarray}
\bm\sigma(\bar{\bm\xi},\bar{\bm\xi})|_{p}&=&0,
\label{con1:rad}\\
{\cal L}_{\bar{\bm\eta}}\left[\bm\sigma(\bar{\bm\xi},\bar{\bm\xi})\right]|_{p}&\ge&0.
\label{con2:rad}
\end{eqnarray}
\end{subequations}
\label{prop:2}
\end{thm}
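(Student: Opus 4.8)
The plan is to obtain Theorem~\ref{prop:2} as an essentially immediate corollary of Theorem~\ref{prop:con}, exploiting the single structural fact that the radiation fluid has a \emph{constant} speed of sound $v_{\rm s}^{2}\equiv 1/d$. The first step is to make precise the assertion, already flagged in the text above, that $\Phi$ reduces to the $(\bar{\bm\xi},\bar{\bm\xi})$ component of the shear. Starting from $\Phi=\bm B(\bar{\bm\xi},\bar{\bm\xi})+v_{\rm s}^{2}\Theta$ in Eq.~\eqref{Phi0}, I would insert $\bm B=\bm B^{\perp}+\bar{\bm\eta}^{\flat}\otimes\bm a^{\flat}$ from Eq.~\eqref{B:perp1} and the trace decomposition $\bm B^{\perp}=(\Theta/d)\,\bm g^{\perp}+\bm\sigma+\bm\omega$ from Eq.~\eqref{B:perp2}, and evaluate on the pair $(\bar{\bm\xi},\bar{\bm\xi})$: the term $\bar{\bm\eta}^{\flat}\otimes\bm a^{\flat}$ contributes nothing because $\bar{\bm\xi}\cdot\bar{\bm\eta}=0$, the antisymmetric part $\bm\omega$ drops out by symmetry of the arguments, and $\bm g^{\perp}(\bar{\bm\xi},\bar{\bm\xi})=\bar{\bm\xi}\cdot\bar{\bm\xi}-(\bar{\bm\xi}\cdot\bar{\bm\eta})^{2}=-1$. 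This yields $\bm B(\bar{\bm\xi},\bar{\bm\xi})=-\Theta/d+\bm\sigma(\bar{\bm\xi},\bar{\bm\xi})$ and hence
\[
\Phi=\bm\sigma(\bar{\bm\xi},\bar{\bm\xi})+\left(v_{\rm s}^{2}-\tfrac{1}{d}\right)\Theta ,
\]
valid at every point of $U$; for the radiation fluid the last term vanishes identically, so $\Phi=\bm\sigma(\bar{\bm\xi},\bar{\bm\xi})$ as functions on $U$.

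With this identity in hand the two conclusions follow by direct substitution into Theorem~\ref{prop:con}. Condition~\eqref{con1} reads $\Phi|_{p}=0$, which is precisely \eqref{con1:rad}. For \eqref{con2}, I would use that $v_{\rm s}^{2}$ is constant, so ${\cal L}_{\bar{\bm\eta}}v_{\rm s}^{2}=0$ and the entire right-hand side of \eqref{con2} vanishes; what survives is ${\cal L}_{\bar{\bm\eta}}\Phi|_{p}\ge 0$, and since $\Phi$ and $\bm\sigma(\bar{\bm\xi},\bar{\bm\xi})$ agree as functions on $U$ their Lie derivatives along $\bar{\bm\eta}$ agree pointwise, giving \eqref{con2:rad}.

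I do not expect a genuine obstacle: the statement is a corollary, and the only content is the algebraic identity for $\Phi$, which rests solely on the orthogonality relations among $\bar{\bm\xi}$, $\bar{\bm\eta}$, $\bm a$ and on the decomposition of $\bm B^{\perp}$ established in Sec.~\ref{II}. The one point I would be careful to state is that the prefactor $\tfrac12 v^{-2}(1-v^{2})^{-1}$ multiplying ${\cal L}_{\bar{\bm\eta}}(v_{\rm s}^{2}-v^{2})\,{\cal L}_{\bar{\bm\eta}}v_{\rm s}^{2}$ in \eqref{con2} is finite and nonzero at $p$, so that setting ${\cal L}_{\bar{\bm\eta}}v_{\rm s}^{2}=0$ genuinely annihilates that term; this is guaranteed because at a sonic point of the radiation fluid $v^{2}|_{p}=v_{\rm s}^{2}|_{p}=1/d\in(0,1)$, whence $v|_{p}\ne 0$ and $(1-v^{2})|_{p}\ne 0$.
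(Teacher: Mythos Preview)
Your proposal is correct and follows essentially the same approach as the paper's proof: both invoke Theorem~\ref{prop:con}, the identity $\Phi=\bm\sigma(\bar{\bm\xi},\bar{\bm\xi})$ valid everywhere in $U$ for the radiation fluid, and the constancy of $v_{\rm s}$ to annihilate the right-hand side of \eqref{con2}. Your write-up is in fact more explicit than the paper's, spelling out the decomposition $\bm B(\bar{\bm\xi},\bar{\bm\xi})=-\Theta/d+\bm\sigma(\bar{\bm\xi},\bar{\bm\xi})$ and the finiteness of the prefactor at $p$, which the paper leaves implicit.
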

\begin{proof}
The speed of sound for the radiation fluid is given by $v_{\rm s}=1/\sqrt{d}$ from the equation of state.
Therefore, $\Phi$ for the radiation fluid case is written as $\Phi=\bm\sigma(\bar{\bm\xi},\bar{\bm\xi})$ at every point in $U$, and furthermore the right-hand side of the inequality~\eqref{con2} vanishes from the constancy of the speed of sound for the radiation fluid.
\end{proof}
\noindent
We note that, in Theorem\ref{prop:2}, the conditions are imposed on the time-time component of the shear tensor for the congruence of the streamlines 
in the radiation fluid case.

We further deform the second condition~\eqref{con2:rad} for the radiation fluid in preparation for checking a model of the sonic point/photon surface correspondence in Sec.~\ref{III}.
We can rewrite the inequality~\eqref{con2:rad} to
\begin{equation}
A_{1}+A_{2}+A_{3}\ge0,
\label{ineq}
\end{equation}
where 
\begin{subequations}
\begin{eqnarray}
A_{1}&:=&\left\{(\bm\sigma\cdot\bm\sigma-\bm\omega\cdot\bm\omega-2\bm\sigma\cdot\bm\omega)(\bar{\bm\xi},\bar{\bm\xi})-\frac{1}{d}\left(\bm\sigma:\bm\sigma+\bm\omega:\bm\omega\right)\right\}\Big|_{p},\label{A1}\\
A_{2}&:=&\left\{(\nabla\otimes\bm a^{\flat})(\bar{\bm\xi},\bar{\bm\xi})+\frac{1}{d}~\nabla\cdot\bm a\right\}\Big|_{p},\label{A2}\\
A_{3}&:=&\left\{-R(\bar{\bm\xi},\bar{\bm\eta},\bar{\bm\xi},\bar{\bm\eta})-\frac{1}{d}~{\rm Ric} (\bar{\bm\eta},\bar{\bm\eta})\right\}\Big|_{p}.\label{A3}
\end{eqnarray}
\end{subequations}
The colon ``~$:$~'' denotes double dot product defined by $\bm X:\bm Y={\rm tr}_{1,4}{\rm tr}_{2,3}(\bm X\otimes \bm Y)$ for tensor fields $\bm X, \bm Y$ of rank 2, while the dot product is given by $\bm X\cdot \bm Y={\rm tr}_{2,3}(\bm X\otimes\bm Y)$.
$R$ is the Riemann curvature tensor, and $\rm Ric$ is the Ricci tensor.
Readers may refer to Appendix~\ref{app:ineq} for the details of the
derivation of the inequality~\eqref{ineq}.

\section{Sonic point and photon surface}
\label{III}

In Sec.~\ref{II}, we obtained the necessary conditions~\eqref{con1} and
\eqref{con2} that the steady perfect fluid flow 
must satisfy at a sonic point.
In this section, we interpret these conditions to the conditions imposed on 
a certain timelike hypersurface crossing the sonic point 
which is a relativistic generalization of the section of the de Laval
nozzle (\ref{III1}).
The timelike hypersurface will reproduce the 
throat of the nozzle in the de Laval nozzle model, and it will be an unstable or marginally stable photon surface 
in the known examples of the sonic point/photon surface correspondence~\cite{Koga:2016jjq,Koga:2018ybs,Koga:2019teu}.
Then we summarize the relation between the conditions for sonic point and photon surface in Sec.~\ref{III2}. 

\subsection{proper section of congruence}
\label{III1}

We define the proper section of the congruence of streamlines around a streamline as follows:
\begin{defn}[proper section of a streamline congruence]
We define the proper section of a streamline congruence associated with a streamline 
as the codimension-1 foliation satisfying the following property: 
for every point $p$ on the streamline, there exists a timelike leaf $S\ni p$ of the foliation 
with the induced metric $\bm h$ in a manner such that every geodesic 
$\beta$ of $(S,\bm h)$ from $p$ 
satisfies
\begin{subequations}
\begin{eqnarray}
\bar{\bm\eta}\cdot\dot\beta~|_{p}&=&0,\label{perp:S}\\
\left\{\nabla_{\dot{\beta}}\dot{\beta}+\bm B(\dot{\beta},\dot{\beta})\bar{\bm\eta}\right\}\Big|_{p}&=&0,
\label{pro}
\end{eqnarray}
\end{subequations}
where $\dot\beta$ denotes the tangent vector of $\beta$.
\label{def:section}
\end{defn}
\noindent
Definition~\ref{def:section} stipulates that we always find the proper section $S$ for every point $p$ on a given streamline by emitting $\beta$ in all the directions orthogonal to $\bar{\bm\eta}|_{p}$, where the tangent bundle $TS$ is spanned by $\dot\beta$ and $d-1$ linearly independent Jacobi fields for the congruence of $\beta$.
We remark that the definition of the proper section presented above stipulates that we can find such a foliation even in the presence of the vorticity $\bm\omega$ on the streamline.
Note that, in all known examples of the sonic point/photon surface correspondence~\cite{Koga:2016jjq,Koga:2018ybs,Koga:2019teu}, 
not only the shear but also the vorticity vanishes at the sonic point, 
and $A_{1}$ in Eq.~\eqref{ineq} vanishes. 
We will see the details in Sec.~\ref{IIIc}.

We also have an alternative definition of 
a proper section to Definition~\ref{def:section} as in the following proposition:
\begin{prop}
A timelike hypersurface $S\ni p$ is the proper section for the point $p$ if and only if $S$ satisfies 
\begin{subequations}
\begin{eqnarray}
\left(\bar{\bm m}-\bar{\bm\eta}\right)|_{p}&=&0,\label{eta:m0}\\
\left(\bm\chi-\bm B^{{\rm (S)}\perp}\right)|_{p}&=&0,\label{hB}
\end{eqnarray}
\end{subequations}
where $\bar{\bm m}$ is the unit vector field normal to $S$, $\bm\chi$ is the second fundamental form for $S$, and $\bm B^{{\rm (S)}\perp}$ denotes the symmetric part of $\bm B^{\perp}$:
\begin{equation}
\bm B^{{\rm (S)}\perp}|_{p}:=\left(\frac{\Theta}{d}\,\bm g^{\perp}+\bm\sigma\right)\Big|_{p}.
\label{SymB}
\end{equation}
\label{prop:S}
\end{prop}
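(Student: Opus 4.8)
The plan is to translate both Eqs.~\eqref{eta:m0} and \eqref{hB} into statements about the second fundamental form $\bm\chi$ of $S$ at $p$ by means of the Gauss formula, and then to exploit the fact that the geodesics of $(S,\bm h)$ issuing from $p$ leave $p$ in every tangent direction. Since $S$ is timelike, its unit normal $\bar{\bm m}$ is spacelike, $\bar{\bm m}\cdot\bar{\bm m}=1$, and the Gauss formula reads $\nabla_{\bm X}\bm Y=\nabla^{(S)}_{\bm X}\bm Y-\bm\chi(\bm X,\bm Y)\,\bar{\bm m}$ for $\bm X,\bm Y\in T_{p}S$, with $\nabla^{(S)}$ the Levi-Civita connection of $\bm h$ and the sign the one forced by the convention for $\bm\chi$ of Theorem~\ref{def:2} (the opposite sign would contradict \eqref{hB}). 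If $\beta$ is a geodesic of $(S,\bm h)$ from $p$ then $\nabla^{(S)}_{\dot\beta}\dot\beta=0$, so $\nabla_{\dot\beta}\dot\beta|_{p}=-\bm\chi(\dot\beta,\dot\beta)\,\bar{\bm m}|_{p}$, and condition~\eqref{pro} of Definition~\ref{def:section} becomes $\{\bm B(\dot\beta,\dot\beta)\,\bar{\bm\eta}-\bm\chi(\dot\beta,\dot\beta)\,\bar{\bm m}\}|_{p}=0$. I will also use that, for any $\bm X$ orthogonal to $\bar{\bm\eta}|_{p}$, the term $\bar{\bm\eta}^{\flat}\otimes\bm a^{\flat}$ in Eq.~\eqref{B:perp1} and the antisymmetric part $\bm\omega$ in Eq.~\eqref{B:perp2} contribute nothing, so $\bm B(\bm X,\bm X)=\bm B^{{\rm (S)}\perp}(\bm X,\bm X)$ with $\bm B^{{\rm (S)}\perp}$ as in Eq.~\eqref{SymB}, and that $\bm B^{{\rm (S)}\perp}(\,\cdot\,,\bar{\bm\eta})=0$ by Eq.~\eqref{B:1} together with $\bm a\cdot\bar{\bm\eta}=0$.

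For the ``only if'' direction, suppose $S$ is the proper section for $p$. Applying \eqref{perp:S} to $(S,\bm h)$-geodesics in all directions shows $\bar{\bm\eta}|_{p}$ is orthogonal to the whole of $T_{p}S$; being a spacelike unit vector, it equals $\pm\bar{\bm m}|_{p}$, and fixing the orientation of $\bar{\bm m}$ accordingly gives Eq.~\eqref{eta:m0}. Substituting $\bar{\bm m}|_{p}=\bar{\bm\eta}|_{p}$ into the above form of \eqref{pro} and cancelling $\bar{\bm\eta}|_{p}\neq 0$ yields $\bm\chi(\dot\beta,\dot\beta)|_{p}=\bm B(\dot\beta,\dot\beta)|_{p}=\bm B^{{\rm (S)}\perp}(\dot\beta,\dot\beta)|_{p}$ for every $\dot\beta|_{p}\in T_{p}S$. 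Polarizing these symmetric bilinear forms gives $\bm\chi|_{p}=\bm B^{{\rm (S)}\perp}|_{p}$ on $T_{p}S$; since both sides annihilate the normal line $\mathbb{R}\bar{\bm m}|_{p}=\mathbb{R}\bar{\bm\eta}|_{p}$, this is Eq.~\eqref{hB}. For the ``if'' direction, assume Eqs.~\eqref{eta:m0} and \eqref{hB}. Then every $(S,\bm h)$-geodesic $\beta$ from $p$ has $\dot\beta|_{p}\in T_{p}S$, orthogonal to $\bar{\bm m}|_{p}=\bar{\bm\eta}|_{p}$, which is \eqref{perp:S}; and feeding \eqref{hB} and \eqref{eta:m0} back into the Gauss-formula identity gives $\{\bm B(\dot\beta,\dot\beta)\,\bar{\bm\eta}-\bm\chi(\dot\beta,\dot\beta)\,\bar{\bm m}\}|_{p}=\{\bm B^{{\rm (S)}\perp}(\dot\beta,\dot\beta)-\bm\chi(\dot\beta,\dot\beta)\}|_{p}\,\bar{\bm\eta}|_{p}=0$, which is \eqref{pro}. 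Hence $S$ is the proper section for $p$.

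I do not expect a genuine obstacle; the argument is essentially bookkeeping. The two places that need care are the sign in the Gauss formula — it is what links the assertion $\bm\chi=\bm B^{{\rm (S)}\perp}$, with the $+\,\Theta/d\,\bm g^{\perp}$ sign of Eq.~\eqref{SymB}, to the orientation $\bar{\bm m}|_{p}=\bar{\bm\eta}|_{p}$, and it uses that $S$ is timelike — and the passage from ``$\bm\chi(\dot\beta,\dot\beta)=\bm B^{{\rm (S)}\perp}(\dot\beta,\dot\beta)$ for every $S$-geodesic from $p$'' to the tensorial identity $\bm\chi|_{p}=\bm B^{{\rm (S)}\perp}|_{p}$, which is just polarization but rests on the fact that $(S,\bm h)$-geodesics realize every initial velocity in $T_{p}S$.
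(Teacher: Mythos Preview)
Your proposal is correct and follows essentially the same route as the paper's proof: both directions hinge on the Gauss formula $\nabla_{\dot\beta}\dot\beta+\bm\chi(\dot\beta,\dot\beta)\bar{\bm m}=0$ for $(S,\bm h)$-geodesics, the orientation choice $\bar{\bm m}|_{p}=\bar{\bm\eta}|_{p}$, and the passage from the quadratic identity $\bm\chi(\dot\beta,\dot\beta)=\bm B^{(\rm S)\perp}(\dot\beta,\dot\beta)$ for all $\dot\beta\in T_{p}S$ to the tensorial one. The only cosmetic difference is that you invoke polarization directly, whereas the paper spells it out by taking $\dot\beta_{1}+\dot\beta_{2}$ as the initial velocity of a third geodesic and expanding.
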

\begin{proof}
($\Longleftarrow$)
Assume that $S$ satisfies Eqs.~\eqref{eta:m0} and \eqref{hB}.
Consider an arbitrary geodesic $\beta$ of $(S,\bm h)$ crossing $p$.
Because $\beta\subset S$, $\dot{\beta}$ immediately obeys $\bar{\bm m}\cdot\dot\beta=0$, and therefore $\dot{\beta}$ satisfies Eq.~\eqref{perp:S} from Eq.~\eqref{eta:m0} for any $\beta$.
Then $\dot\beta$ obeys the geodesic equation $\nabla_{\dot\beta}\dot\beta+\bm\chi(\dot\beta,\dot\beta)\bar{\bm m}=0$.
From Eqs.~\eqref{eta:m0} and \eqref{hB}, replacing $\bm \chi$ and $\bar{\bm m}$ in the second term of the geodesic equation with $\bm B^{{\rm (S)}\perp}$ and $\bar{\bm\eta}$, respectively, we obtain Eq.~\eqref{pro}.

($\Longrightarrow$) 
Assume that $S$ satisfies Eqs.~\eqref{perp:S} and \eqref{pro}.
Consider an arbitrary geodesic $\beta$ of $(S,\bm h)$ crossing $p$.
Because $\beta\subset S$, $\dot{\beta}$ immediately obeys $\bar{\bm m}\cdot\dot\beta=0$.
Combining $\bar{\bm m}\cdot\dot\beta=0$ with Eq.~\eqref{perp:S}, we obtain $(\bar{\bm m}-\bar{\bm\eta})|_{p}=c\,\bar{\bm m}|_{p}$, where a constant $c$ is determined to either of $c=0$ or $c=2$ from the normalization conditions on $\bar{\bm m}$ and $\bar{\bm\eta}$, which depends on the orientation of $S$ by $\bar{\bm m}$.
Choosing 
the appropriate orientation $c=0$, 
we obtain Eq.~\eqref{eta:m0}.
Then, from Eq.~\eqref{pro} and the geodesic equation 
on the hypersurface $S$ for $\beta$,
$\nabla_{\dot{\beta}}\dot{\beta}+\bm \chi(\dot{\beta},\dot{\beta})\bar{\bm m}=0$,
we have 
$\left(\bm\chi-\bm B^{(\rm S)}\right)(\dot{\beta},\dot{\beta})|_p=0$ 
for all the geodesics $\beta\ni p$ of $(S,\bm h)$ 
where we also used Eq.~\eqref{eta:m0}.
Here, for any geodesics $\beta_{1},\beta_{2}\ni p$ of $(S,\bm h)$, there is another geodesic $\beta_{3}\ni p$ of $(S,\bm h)$ such that $\dot\beta_{1}|_{p}+\dot\beta_{2}|_{p}$ and $\dot\beta_{3}|_{p}$ are linearly dependent.
Therefore, we arrive at $\left(\bm\chi-\bm B^{(\rm S)}\right)(\dot{\beta}_{1}+\dot{\beta}_{2},\dot{\beta}_{1}+\dot{\beta}_{2})|_p=0$.
Applying $\left(\bm\chi-\bm B^{(\rm S)}\right)(\dot{\beta}_{1},\dot{\beta}_{1})|_p=\left(\bm\chi-\bm B^{(\rm S)}\right)(\dot{\beta}_{2},\dot{\beta}_{2})|_p=0$ gives $\left(\bm\chi-\bm B^{(\rm S)}\right)(\dot{\beta}_{1},\dot{\beta}_{2})|_p=0$.
Then we obtain Eq.~\eqref{hB}.
\end{proof}
%
%
\noindent
It should be emphasized that, through Proposition~\ref{prop:S}, 
the shear tensor
$\bm \sigma$ in Theorem~\ref{prop:2} can be regarded as the traceless part 
${\bm
\sigma}_{\bm \chi}$
of the 
second fundamental form for the proper section of the fluid flow.
That is, letting $H$ be the mean curvature for $S$, and 
${\bm
\sigma}_{\bm \chi}$ be the trace-free part of $\chi$, we find 
${\bm \chi} = H{\bm h} + {\bm \sigma}_{\bm \chi}$, and 
\begin{subequations}
\begin{eqnarray}
\frac{1}{d}\Theta\Big|_{p} &=& H|_{p}, \\
{\bm \sigma}|_{p}&=&{\bm \sigma}_{\bm \chi}|_{p}.
\end{eqnarray}
\end{subequations}
Therefore, 
the conditions for the sonic point can be interpreted as conditions for 
the associated proper section, and we can compare them with the conditions for 
the photon surface. 
However, we note that 
$\bar{\bm \eta}$ is identical to $\bar{\bm m}$ only at the point $p$ 
and not hyper-surface normal in general. 
Thus 
the Lie derivative of $\bm \sigma$ along $\bar{\bm \eta}$ 
included in Eq.~\eqref{con2:rad} should be carefully evaluated.

\subsection{Sonic point on photon surface}
\label{III2}

Let us summarize the conditions for the sonic point and photon surface, and compare them with each other. 
For completeness, here we describe the definition of the stability of a photon surface: 
\begin{defn}[Unstable or marginally stable photon surface~\cite{Claudel:2000yi,Perlick:2005jn,Koga:2019uqd}]
A photon surface of a spacetime $(M,\bm g)$ of arbitrary dimension is an immersed, nowhere-spacelike hypersurface $S$ 
such that, for every point $p \in S$ and every null vector $\bm k \in T_{p}S$, there exists a null geodesic $\gamma : (-\epsilon, \epsilon) \to M$ 
with $\gamma(0) =\bm k$ and $|\gamma| \subset S$.

Let $R$ be the Riemann curvature tensor associated with the Levi-Civita connection on $(M,\bm g)$ and let $\bar{\bm m}\in N_{p}S$ be the unit vector normal to $S$.
A photon surface is said to be unstable {\rm (}marginally stable{\rm )} 
if $R(\bar{\bm m},\bm k,\bar{\bm m},\bm k)<0$~{\rm (}$R(\bar{\bm m},\bm k,\bar{\bm m},\bm k)=0${\rm )} for  every point $p \in S$ and every null vector $\bm k \in T_{p}S$.
\label{def:1}
\end{defn}
The following theorem provides an alternative equivalent condition for the stability: 
\begin{thm}[\cite{Claudel:2000yi,Perlick:2005jn,Koga:2019uqd}]
Let $\{S_{r}\}_{r\in(-\epsilon,\epsilon)}$ be a timelike Gaussian normal foliation of $(M,\bm g)$ such that $S=S_{0}$ is a photon surface. 
Let $\bm\sigma_{\bm\chi}$ be the trace-free part of the second fundamental form for $\{S_{r}\}_{r\in(-\epsilon,\epsilon)}$. 
A photon surface is unstable {\rm (}marginally stable{\rm )} if and only if $(\nabla_{\bar{\bm m}}\bm\sigma_{\bm\chi})(\bm k,\bm k)>0$~{\rm (}$(\nabla_{\bar{\bm m}}\bm\sigma_{\bm\chi}{\rm )}(\bm k,\bm k)=0${\rm )} 
for  every point $p \in S$ and every null vector $\bm k \in T_{p}S$. 
\label{thm:stbsig}
\end{thm}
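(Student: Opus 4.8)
The plan is to obtain the two equivalences directly from the Riccati (radial evolution) equation for the second fundamental form of the Gaussian normal foliation, using that a photon surface is totally umbilical by Theorem~\ref{def:2}.

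First I would work in Gaussian normal coordinates adapted to $\{S_{r}\}_{r\in(-\epsilon,\epsilon)}$, so that $\bar{\bm m}$ is the unit geodesic normal field, $\bm g=\bar{\bm m}^{\flat}\otimes\bar{\bm m}^{\flat}+\bm h$ with $\bm h$ the induced metric on the leaves, $S=S_{0}$, and write $\bm\chi=H\bm h+\bm\sigma_{\bm\chi}$ with $H$ the mean curvature. The Riccati equation for $\bm\chi$ along the normal geodesic congruence reads, for vector fields $\bm X,\bm Y$ tangent to the leaves,
\begin{equation}
(\nabla_{\bar{\bm m}}\bm\chi)(\bm X,\bm Y)=-(\bm\chi\cdot\bm\chi)(\bm X,\bm Y)-R(\bar{\bm m},\bm X,\bar{\bm m},\bm Y),
\label{plan:riccati}
\end{equation}
where the sign of the curvature term is fixed so as to be consistent with the convention implicit in Definition~\ref{def:1}. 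Splitting \eqref{plan:riccati} into its $\bm h$-trace part (the Raychaudhuri equation for $H$) and its trace-free part produces the evolution equation for $\bm\sigma_{\bm\chi}$.

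Next I would fix $p\in S$ and a null $\bm k\in T_{p}S$, extend $\bm k$ by parallel transport along $\bar{\bm m}$ (which, since $\bar{\bm m}$ is geodesic, keeps $\bm k$ null and tangent to the leaves), and contract the trace-free part of \eqref{plan:riccati} with $\bm k\otimes\bm k$ at $p$. Two simplifications occur on $S_{0}$: the terms proportional to $\bm h$ arising from the trace subtractions drop out because $\bm h(\bm k,\bm k)=0$; and, by Theorem~\ref{def:2}, $S_{0}$ is totally umbilical, so $\bm\chi|_{S_{0}}=H\bm h$ and hence $(\bm\chi\cdot\bm\chi)(\bm k,\bm k)|_{S_{0}}=H^{2}\,\bm h(\bm k,\bm k)|_{S_{0}}=0$. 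Moreover $\bm\sigma_{\bm\chi}(\bm k,\bm k)=\bm\chi(\bm k,\bm k)$ holds along the whole normal line and $\bm\sigma_{\bm\chi}|_{S_{0}}=0$, so $(\nabla_{\bar{\bm m}}\bm\sigma_{\bm\chi})(\bm k,\bm k)|_{p}$ equals the ordinary derivative of the scalar $\bm\sigma_{\bm\chi}(\bm k,\bm k)$ along the normal flow at $p$. Collecting these facts, \eqref{plan:riccati} reduces at $p$ to
\begin{equation}
(\nabla_{\bar{\bm m}}\bm\sigma_{\bm\chi})(\bm k,\bm k)\big|_{p}=-R(\bar{\bm m},\bm k,\bar{\bm m},\bm k)\big|_{p},
\label{plan:key}
\end{equation}
and comparing \eqref{plan:key} with Definition~\ref{def:1} yields the stated equivalences: $R(\bar{\bm m},\bm k,\bar{\bm m},\bm k)<0$ for every $p\in S$ and every null $\bm k\in T_{p}S$ is equivalent to $(\nabla_{\bar{\bm m}}\bm\sigma_{\bm\chi})(\bm k,\bm k)>0$ for every such $p,\bm k$, and likewise with the strict inequalities replaced by equalities.

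The computation is short; the only delicate points are bookkeeping ones. I would be careful to keep the sign of the curvature term in \eqref{plan:riccati} consistent with the convention of Definition~\ref{def:1}, and to justify that $(\nabla_{\bar{\bm m}}\bm\sigma_{\bm\chi})(\bm k,\bm k)$ may be evaluated as an ordinary derivative of $\bm\sigma_{\bm\chi}(\bm k,\bm k)$ along the flow — which works precisely because the correction/extension terms are proportional to $\bm\sigma_{\bm\chi}$ and therefore vanish on the umbilical leaf $S_{0}$. I expect the step where one checks that the quadratic term $(\bm\chi\cdot\bm\chi)(\bm k,\bm k)$ drops out to be the one place where the hypothesis that $S$ is a photon surface (equivalently, totally umbilical, Theorem~\ref{def:2}) is genuinely used, which is also the reason the characterization is pointwise on $S_{0}$.
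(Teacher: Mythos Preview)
The paper does not supply its own proof of this theorem; it is quoted from the cited references \cite{Claudel:2000yi,Perlick:2005jn,Koga:2019uqd} and used as an input. Your argument via the Riccati equation for $\bm\chi$ in the Gaussian normal foliation, contracted with a parallel null $\bm k$, is correct and is essentially the standard derivation in those references: total umbilicity kills the quadratic term, nullity of $\bm k$ kills the trace contributions, and one is left with \eqref{plan:key}. This is also in line with the computation the paper sketches in the footnote following Eq.~\eqref{A3:ineq3}, where the same Ricci/Riccati identity for $\nabla_{\bar{\bm w}}\bm\chi$ is used to relate $\mathcal L_{\bar{\bm w}}[\bm\sigma_{\bm\chi}(\bar{\bm\xi},\bar{\bm\xi})]$ to the curvature term $-R(\bar{\bm w},\bar{\bm\xi},\bar{\bm w},\bar{\bm\xi})$.
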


Combining Theorems~\ref{def:2} and \ref{thm:stbsig}, we obtain the conditions 
for an unstable or marginally stable photon surface as follows:
\begin{subequations}
\begin{eqnarray}
\bm \sigma_{\bm \chi}|_p=0, 
\label{eq:pscon1}
\\
(\nabla_{\bar{\bm m}} \bm \sigma_{\bm \chi}) (\bm k,\bm k)|_p\geq 0, 
\label{eq:pscon2}
\end{eqnarray}
\end{subequations}
where we note that $\bm \sigma|_p=\bm \sigma_{\bm \chi}|_p$ but $\bm \sigma\neq\bm \sigma_{\bm \chi}$ 
at the other points on the stream line in general. 
These conditions should be compared with the conditions \eqref{con1:rad} and \eqref{con2:rad}. 
It can be easily found that \eqref{con1:rad} guarantees only a part of the condition \eqref{eq:pscon1}, 
and the inequality \eqref{con2:rad} is not equivalent to \eqref{eq:pscon2}. 
Therefore, 
we conclude that, in all known examples for the sonic point/photon surface correspondence, 
additional assumptions, such as spatial symmetry and specific fluid configuration, are essential 
for the realization of the correspondence. 
More concretely, regarding the equality condition on the shear tensor,
$\bm\sigma(\bar{\bm\xi},\bar{\bm\xi})|_{p}=0$ should imply $\bm\sigma_{\bm\chi}=0$ together with 
the additional assumptions 
for realization of the sonic point/photon surface correspondence. 
That is, 
Eq.~\eqref{con1:rad} is a necessary condition for the proper section being a photon surface.
From this perspective, we may understand that 
the sonic point/photon surface correspondence would require 
spatial $(d-1)$-dimensional maximal symmetry to the spacetime: $G=SO(d)$, $E(d-1)$ or $SO(d-1,1)$.

Let us consider the general equation of state for the perfect fluid. 
Then 
Proposition~\ref{prop:5} leads to the following proposition 
\begin{prop}
Suppose that $p$ is a sonic point.
If the proper section $S\ni p$ is a photon surface whose mean curvature does not vanish at $p$, the speed of sound at $p$ must be given by $1/\sqrt{d}$.
\label{prop:6}
\end{prop}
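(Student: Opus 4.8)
The plan is to chain together Theorem~\ref{def:2}, Proposition~\ref{prop:S}, and the ``$\Longleftarrow$'' direction of Proposition~\ref{prop:5}; no new computation is needed. First I would use the geometric characterization of a photon surface: by Theorem~\ref{def:2}, the timelike hypersurface $S$ being a photon surface is equivalent to its being totally umbilical, so the trace-free part of its second fundamental form vanishes identically on $S$, and in particular $\bm\sigma_{\bm\chi}|_{p}=0$. Next, since $S$ is by hypothesis the proper section of the streamline congruence through $p$, Proposition~\ref{prop:S} gives $\bar{\bm m}|_{p}=\bar{\bm\eta}|_{p}$ and $\bm\chi|_{p}=\bm B^{{\rm (S)}\perp}|_{p}$. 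Comparing the decomposition $\bm\chi=H\bm h+\bm\sigma_{\bm\chi}$ with $\bm B^{{\rm (S)}\perp}=\frac{\Theta}{d}\,\bm g^{\perp}+\bm\sigma$ then yields the two pointwise identifications $H|_{p}=\frac{1}{d}\Theta|_{p}$ and $\bm\sigma|_{p}=\bm\sigma_{\bm\chi}|_{p}$ already recorded below Proposition~\ref{prop:S}.

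I would then observe that $\bar{\bm\xi}|_{p}\in T_{p}S$: since $\bar{\bm\xi}\cdot\bar{\bm\eta}=0$ and $\bar{\bm m}|_{p}=\bar{\bm\eta}|_{p}$, the vector $\bar{\bm\xi}|_{p}$ is orthogonal to the unit normal of $S$ at $p$, so evaluating the umbilicity condition on the pair $(\bar{\bm\xi},\bar{\bm\xi})$ at $p$ is legitimate. Combining $\bm\sigma_{\bm\chi}|_{p}=0$ with $\bm\sigma|_{p}=\bm\sigma_{\bm\chi}|_{p}$ gives $\bm\sigma(\bar{\bm\xi},\bar{\bm\xi})|_{p}=0$. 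Moreover, the hypothesis that the mean curvature does not vanish at $p$, i.e.\ $H|_{p}\neq0$, translates via $\Theta|_{p}=dH|_{p}$ into $\Theta|_{p}\neq0$. Feeding these two facts (together with the assumption that $p$ is a sonic point) into the ``$\Longleftarrow$'' direction of Proposition~\ref{prop:5} yields $v_{\rm s}^{2}|_{p}=1/d$, and since $v_{\rm s}:M\to(0,1)$ is positive by the standing assumption~\eqref{vs}, this is $v_{\rm s}|_{p}=1/\sqrt{d}$, as claimed.

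There is no genuine obstacle here; the proof is essentially an assembly of earlier results. The only point requiring care is the bookkeeping around Proposition~\ref{prop:S}: the equalities $\bm\sigma=\bm\sigma_{\bm\chi}$, $\Theta=dH$, and $\bar{\bm m}=\bar{\bm\eta}$ hold only at the single point $p$ and not along the whole streamline (as emphasized in the discussion after Proposition~\ref{prop:S} and again before~\eqref{eq:pscon2}), so every evaluation must be pinned to $p$. It is also worth stating explicitly in the write-up that one only needs the single scalar equation $\bm\sigma(\bar{\bm\xi},\bar{\bm\xi})|_{p}=0$, not the full tensorial vanishing $\bm\sigma|_{p}=0$, in order to invoke Proposition~\ref{prop:5}.
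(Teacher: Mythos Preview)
Your proposal is correct and follows essentially the same route as the paper's own proof: identify $\bm\sigma|_{p}=\bm\sigma_{\bm\chi}|_{p}$ and $\Theta|_{p}=dH|_{p}$ via the proper-section correspondence, use total umbilicity of the photon surface to get $\bm\sigma(\bar{\bm\xi},\bar{\bm\xi})|_{p}=0$, and then invoke the $\Longleftarrow$ direction of Proposition~\ref{prop:5}. Your write-up is in fact more careful than the paper's (you explicitly check $\bar{\bm\xi}|_{p}\in T_{p}S$ and stress that the identifications hold only at $p$), but the argument is the same.
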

\begin{proof}
Denoting the mean curvature of the second fundamental form $\bm \chi$ on the photon surface by $H$, we find 
$\frac{1}{d}\Theta|_p=H|_p\neq 0$ and $\bm \sigma(\bar{\bm \xi},\bar{\bm \xi})|_p=\bm \sigma_{\bm \chi}(\bar{\bm \xi},\bar{\bm \xi})|=0$. 
Therefore, 
Proposition~\ref{prop:5} applies, i.e. we get $v_{\rm s}|_{p}=1/\sqrt{d}$.
\end{proof}

So far, 
we have considered the conditions for a given sonic point to be on the photon surface.
On the other hand, we may consider the following question: is it possible to find a solution of 
the steady radiation fluid flow normal to a given photon surface 
with a sonic point on the photon surface?  
Practically, 
through the equations of motion \eqref{1st} - \eqref{eos}, we can 
specify the configuration of  
the steady radiation fluid flow 
in the vicinity of the timelike hypersurface $S$ 
by a set of functional forms of 
$P$, $s$, $v$ and the vector field $\bar{\bm\eta}$ on $S$.  
In other words, we regard the solutions 
are equivalent to each other 
if they 
satisfy  
the same boundary condition on $S$.
Therefore, in order to answer our question, 
we have to consider the conditions \eqref{con1:rad} and \eqref{con2:rad} 
on the photon surface $S$ as a boundary. 
Let us consider the boundary condition satisfying the following equations:
\begin{subequations}
\begin{eqnarray}
\bm\sigma|_{p}=\bm\omega|_{p}&=&0,\label{eta:m}\\
\dd_{S}P\,|_{p}=\dd_{S}s\,|_{p}=\dd_{S}v\,|_{p}&=&0,
\label{dSPsv}
\end{eqnarray}
\end{subequations}
where $\dd_{S}$ denotes the exterior derivative on $S$. 
Setting $v|_p=v_{\rm s}=1/\sqrt{d}$, the condition \eqref{con1:rad} can be trivially satisfied. 
We also find that we can make the condition~\eqref{con2:rad} satisfied by 
appropriately setting the functional form of $P$ on the photon surface $S$ 
through the following Proposition \ref{prop:7}. 
\begin{prop}
Consider a steady flow of the radiation fluid. 
Suppose that $p$  
 is the sonic point, and that $p$  satisfies Eqs.~\eqref{eta:m} and \eqref{dSPsv}. 
Then $S$ must satisfy the inequality
\begin{equation}
A_{1}+A_{2}+A_{3}\ge0, 
\label{con2:S}
\end{equation}
where
\begin{subequations}
\begin{eqnarray}
A_{1}&=&0,\label{A1:prop}\\
A_{2}&=&-\left\{\Delta_{S}\left(\ln\sqrt{|\bm\xi\cdot\bm\xi|}\right)+\frac{2}{d}~{\rm Ric}(\bar{\bm m},\bar{\bm\xi})+\left(1-\frac{1}{d}\right)(nh)^{-1}\Delta_{S}P\right\}\Big|_{p},\label{A2:prop}\\
A_{3}&=&-\left\{R(\bar{\bm\xi},\bar{\bm m},\bar{\bm\xi},\bar{\bm m})+\frac{1}{d}~{\rm Ric} (\bar{\bm m},\bar{\bm m})\right\}\Big|_{p},\label{A3:prop}
\end{eqnarray}
\end{subequations}
and $\Delta_{S}$ denotes the Laplace-Beltrami operator on $(S,\bm h)$ with $\bm h$ being the induced metric on $S$.
\label{prop:7}
\end{prop}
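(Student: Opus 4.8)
The plan is to evaluate $A_1$, $A_2$, $A_3$ of \eqref{A1}--\eqref{A3} at $p$ under the stated hypotheses, using Proposition~\ref{prop:S} (so that $\bar{\bm m}|_p=\bar{\bm\eta}|_p$ and $\bm\chi|_p=\tfrac1d\Theta\,\bm h$), the first sonic condition \eqref{con1:rad}, and the elementary identity $\bm B(\bar{\bm\xi},\bar{\bm\xi})=-\mathcal L_{\bar{\bm\eta}}\ln\sqrt{|\bm\xi\cdot\bm\xi|}$, which follows from $\nabla_{\bar{\bm\xi}}\bar{\bm\xi}=\nabla\ln\sqrt{|\bm\xi\cdot\bm\xi|}$ for a unit Killing direction; combined with \eqref{Phi0} and $v_{\rm s}^{2}=1/d$ this gives $\mathcal L_{\bar{\bm\eta}}\ln\sqrt{|\bm\xi\cdot\bm\xi|}|_{p}=\tfrac1d\Theta|_{p}$. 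Two of the three pieces are then immediate: $A_1=0$ because every term of \eqref{A1} is quadratic in $\bm\sigma,\bm\omega$, which vanish at $p$ by \eqref{eta:m}; and $A_3$ reduces to \eqref{A3:prop} on substituting $\bar{\bm\eta}|_p=\bar{\bm m}|_p$ and using the symmetries of the Riemann tensor.

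The substance is $A_2=\{(\nabla\otimes\bm a^{\flat})(\bar{\bm\xi},\bar{\bm\xi})+\tfrac1d\nabla\cdot\bm a\}|_p$. First we would produce an explicit form of the streamline acceleration $\bm a=\nabla_{\bar{\bm\eta}}\bar{\bm\eta}$: contracting \eqref{energy} orthogonally to $\bm u$ (and using \eqref{cont}) gives the relativistic Euler equation $nh\,\nabla_{\bm u}\bm u=-(\bm g^{-1}+\bm u\otimes\bm u)(\dd P,\cdot)$; inserting the decomposition \eqref{decomp}, projecting onto the $(d-1)$-plane orthogonal to $\bar{\bm\xi}$ and $\bar{\bm\eta}$ with projector $\Pi$, and using $\mathcal L_{\bm\xi}P=\mathcal L_{\bm\xi}s=\mathcal L_{\bm\xi}v=0$, $\nabla_{\bar{\bm\xi}}\bar{\bm\xi}=\nabla\ln\sqrt{|\bm\xi\cdot\bm\xi|}$, the relation $\nabla_{\bar{\bm\eta}}\bar{\bm\xi}=\nabla_{\bar{\bm\xi}}\bar{\bm\eta}-(\mathcal L_{\bar{\bm\eta}}\ln\sqrt{|\bm\xi\cdot\bm\xi|})\bar{\bm\xi}$ from \eqref{B:2}, and $\nabla_{\bar{\bm\xi}}\bar{\bm\eta}^{\flat}=\bm B(\bar{\bm\xi},\cdot)=\tfrac1d\Theta\,\bar{\bm\xi}^{\flat}+\bm\sigma(\bar{\bm\xi},\cdot)+\bm\omega(\bar{\bm\xi},\cdot)$, one obtains in a neighbourhood of $p$
\begin{equation*}
v^{2}\bm a=-\Pi\bigl(\nabla\ln\sqrt{|\bm\xi\cdot\bm\xi|}\bigr)-\tfrac{1-v^{2}}{nh}\,\Pi(\nabla P)-2v\,\Pi\bigl[(\bm\sigma+\bm\omega)(\bar{\bm\xi},\cdot)\bigr]^{\sharp}.
\end{equation*}
By \eqref{dSPsv}, $\nabla P|_p$ and $\nabla v|_p$ are parallel to $\bar{\bm m}=\bar{\bm\eta}$, so $\Pi(\nabla P)|_p=\Pi(\nabla v)|_p=0$, and with \eqref{eta:m} this yields $\bm a|_p=-v^{-2}\Pi(\nabla\ln\sqrt{|\bm\xi\cdot\bm\xi|})|_p$.

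Then we would evaluate the two pieces of $A_2$ separately. For $(\nabla\otimes\bm a^{\flat})(\bar{\bm\xi},\bar{\bm\xi})$: the Killing property together with \eqref{frob} gives $\mathcal L_{\bm\xi}\bm a=0$, which turns $\bar\xi^{\mu}\bar\xi^{\nu}\nabla_{\mu}a_{\nu}$ into $-\bm a\cdot\nabla\ln\sqrt{|\bm\xi\cdot\bm\xi|}$, equal at $p$ to $v^{-2}\bigl|\Pi(\nabla\ln\sqrt{|\bm\xi\cdot\bm\xi|})\bigr|^{2}$. For $\tfrac1d\nabla\cdot\bm a$: one differentiates the displayed formula. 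The terms in which $\nabla$ lands on $\nabla\ln\sqrt{|\bm\xi\cdot\bm\xi|}$ or $\nabla P$ assemble into the ambient operators $\nabla^{\mu}(\Pi_{\mu}{}^{\nu}\nabla_{\nu}f)$; expanding $\Pi_{\mu}{}^{\nu}=\delta_{\mu}^{\nu}+\bar\xi_{\mu}\bar\xi^{\nu}-\bar\eta_{\mu}\bar\eta^{\nu}$, using the Gauss relation between the ambient and intrinsic Hessians of $f|_{S}$ (which brings in $\bm\chi|_p=\tfrac1d\Theta\,\bm h$), and using $\mathcal L_{\bar{\bm\xi}}f=0$ for $f=P$ and $f=\ln\sqrt{|\bm\xi\cdot\bm\xi|}$, these collapse — after the $\Theta$-weighted cross terms cancel via $\mathcal L_{\bar{\bm\eta}}\ln\sqrt{|\bm\xi\cdot\bm\xi|}|_p=\tfrac1d\Theta$ — to $\Delta_{S}\ln\sqrt{|\bm\xi\cdot\bm\xi|}$ and $(nh)^{-1}\Delta_{S}P$ up to the factors $v^{2}=1/d$, $1-v^{2}=(d-1)/d$; moreover the quadratic term $|\Pi(\nabla\ln\sqrt{|\bm\xi\cdot\bm\xi|})|^{2}$ produced here cancels the one coming from $(\nabla\otimes\bm a^{\flat})(\bar{\bm\xi},\bar{\bm\xi})$. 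The one surviving contribution is that in which $\nabla$ falls on $\bm\sigma+\bm\omega$ inside the last term of $\bm a$: although $\bm\sigma|_p=\bm\omega|_p=0$, this derivative does not vanish, and it reduces to the divergence $\Pi^{\mu\nu}\bar\xi^{\rho}\nabla_{\mu}(\sigma_{\rho\nu}+\omega_{\rho\nu})$, which on writing $\bar\xi^{\rho}(\sigma+\omega)_{\rho\nu}=(\nabla_{\bar{\bm\xi}}\bar{\bm\eta})_{\nu}-\tfrac1d\Theta\,\bar\xi_{\nu}$ and commuting covariant derivatives by the Ricci identity equals ${\rm Ric}(\bar{\bm m},\bar{\bm\xi})|_p$. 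Collecting all surviving pieces then gives \eqref{A2:prop}, and hence \eqref{con2:S}, since \eqref{con2:rad} is equivalent to \eqref{ineq}.

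The hard part is the bookkeeping in this last step. Because $\bar{\bm\eta}$ agrees with the hypersurface normal $\bar{\bm m}$ only at $p$ and is not hypersurface-orthogonal, one must carefully separate the ambient covariant derivative from the intrinsic one on $S$ when extracting $\Delta_{S}$, retain the non-vanishing derivative of the pointwise-zero shear/vorticity term, and keep track of the numerical coefficients through the factors $v^{2}|_p=1/d$ and $1-v^{2}|_p=(d-1)/d$, so that the $\Theta$-dependent and quadratic contributions cancel exactly and the coefficients $1$, $2/d$ and $1-1/d$ emerge with the correct signs.
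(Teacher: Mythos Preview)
Your proposal is correct and follows essentially the same route as the paper. In both arguments $A_{1}=0$ and the form \eqref{A3:prop} of $A_{3}$ are immediate from \eqref{eta:m} and $\bar{\bm\eta}|_{p}=\bar{\bm m}|_{p}$, and the work goes into $A_{2}$: the paper, like you, uses the tangential Euler equation (its Eq.~\eqref{energy:ei}, which is exactly your displayed formula for $v^{2}\bm a$ with $\bm b:=\Pi[(\bm\sigma+\bm\omega)(\bar{\bm\xi},\cdot)]^{\sharp}$), and the crucial ${\rm Ric}(\bar{\bm m},\bar{\bm\xi})$ term arises in both treatments from the intrinsic divergence of $\bm b$ via the Ricci identity.

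The only difference is organizational. The paper first rewrites $(\nabla\otimes\bm a^{\flat})(\bar{\bm\xi},\bar{\bm\xi})=-{\cal L}_{\bm a}\ln\sqrt{|\bm\xi\cdot\bm\xi|}$ and $\nabla\cdot\bm a=\bm D\cdot\bm a-\bm a\cdot\bm a$, then contracts \eqref{energy:ei} with $\bm a$ to eliminate both ${\cal L}_{\bm a}\ln\sqrt{|\bm\xi\cdot\bm\xi|}$ and $\bm a\cdot\bm a$ in one stroke, arriving at $A_{2}=\tfrac{1}{d}\,\bm D\cdot\bm a|_{p}$; finally it takes the $\bm D$-divergence of \eqref{energy:ei} itself to express $\bm D\cdot\bm a$ through $\Delta_{S}\ln\sqrt{|\bm\xi\cdot\bm\xi|}$, $\Delta_{S}P$ and $\bm D\cdot\bm b={\rm Ric}(\bar{\bm m},\bar{\bm\xi})$. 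Your route—solving \eqref{energy:ei} for $\bm a$ and substituting into $A_{2}$ directly—reaches the same destination but obliges you to track by hand the cancellation of the quadratic pieces $|\Pi(\nabla\ln\sqrt{|\bm\xi\cdot\bm\xi|})|^{2}$ and of the $\Theta$-weighted cross terms, which the paper's contraction trick disposes of automatically. Either way the bookkeeping is the hard part, as you correctly anticipate.
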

\begin{proof}
The expressions~\eqref{A1:prop} and \eqref{A3:prop} of $A_{1}$ and $A_{3}$ immediately follow from Eq.~\eqref{eta:m} and \eqref{eta:m0}, respectively.
We give the derivation of the expression~\eqref{A2:prop} of $A_{2}$ in Appendix~\ref{app:ineq2}.
\end{proof}
\noindent
From the equation of state~\eqref{eos} for the radiation fluid, $n$ and $h$ for the radiation fluid are related with $P$ by
\begin{equation}
nh=\left(\frac{1}{d}\right)^{-1}\left(1+\frac{1}{d}\right)P.
\label{nh}
\end{equation}
Substituting Eq.~\eqref{nh} into Eq.~\eqref{A2:prop}, we can rewrite the inequality~\eqref{con2:S} to the 
following inequality: 
\begin{equation}
P^{-1}\Delta_{S}P\,|_{p}\le-\frac{\left(1+\frac{1}{d}\right)}{\left(\frac{1}{d}\right)\left(1-\frac{1}{d}\right)}\left\{\Delta_{S}\left(\ln\sqrt{|\bm\xi\cdot\bm\xi|}\right)+\frac{2}{d}~{\rm Ric}(\bar{\bm m},\bar{\bm\xi})+R(\bar{\bm\xi},\bar{\bm m},\bar{\bm\xi},\bar{\bm m})+\frac{1}{d}~{\rm Ric} (\bar{\bm m},\bar{\bm m})\right\}\Big|_{p}.  
\label{ineq:P}
\end{equation}
Apparently, choosing the functional form of $P$ such that the left-hand side $P^{-1}\Delta_{S}P|_{p}$ is less than 
the right-hand side of the inequality~\eqref{ineq:P}, 
we can make the condition \eqref{con2:rad} satisfied.

\section{Sonic point/photon surface correspondence}
\label{IIIc}

In this section, we reanalyze known examples of 
the sonic point/photon surface correspondence from the view point of the shear tensor 
for the congruence of the streamlines and the trace-free part of the second fundamental form 
based on Refs.~\cite{Koga:2016jjq,Koga:2018ybs,Koga:2019teu}.

Following Ref.~\cite{Koga:2019teu}, we consider a $(d+1)$-dimensional static spacetime with spatially spherical, planar or hyperbolic symmetry $G$ whose metric is given by 
\begin{equation}
\bm g=g_{tt}(r)\dd t^{2}+g_{rr}(r)\dd r^{2}+\bm\gamma, 
\label{met}
\end{equation}
where smooth functions $g_{tt}(r)$ and $g_{rr}(r)$ satisfy $g_{tt}(r)<0$ and $g_{rr}(r)>0$, respectively.
The induced metric $\bm\gamma$ of a spacelike submanifold  $\{r,t=const.\}$ is given by
\begin{equation}
\bm \gamma=r^{2}\left\{\dd\theta^{2}+s(\theta)^{2}\dd\Omega_{(d-2)}^{2}\right\},~~~
s(\theta)=\begin{cases}\sin\theta&(~G=SO(d)~)\\
\theta&(~G=E(d-1)~)\\
\sinh\theta &(~G=SO(d-1,1)~)
\end{cases},
\label{max}
\end{equation}
where $\dd\Omega_{(d-2)}^{2}$ is the metric of the unit $(d-2)$-sphere.
By introducing the polar coordinates $(\phi_{1},\cdots,\phi_{d-2})$, we write $\dd\Omega_{(d-2)}^{2}$ as
\begin{equation}
\dd\Omega_{(d-2)}^{2}=\sum_{i=1}^{d-2}\left(\prod_{j=1}^{i}\sin^{2}\phi_{j}\right)\frac{\dd\phi^{2}_{i}}{\sin^{2}\phi_{i}},
\end{equation}
and we write the $(d-2)$-th coordinate $\phi_{d-2}$ by $\phi_{d-2}=:\phi$ in order to distinguish $\phi_{d-2}$ as the azimuthal coordinate.
The corresponding basis $\bm\partial_{\phi}$ to $\phi$ is a spacelike Killing vector field on the spacetime.
We also denote $\theta$ by $\theta=\phi_{0}$ for convenience.

Following Ref.~\cite{Koga:2018ybs}, we investigate the steady perfect fluid flow 
with Killing observers on the equatorial plane 
$\{ \phi_{0}=\cdots=\phi_{d-3}=\pi/2\}$.
We impose the following several assumptions on the flow as in Ref.~\cite{Koga:2018ybs}.
Letting 
$\rho^{2}:=\phi_{0}^{2}+\cdots+\phi_{d-3}^{2}$, 
we define the equivalence relation ``\,$\sim$\,'' around the equatorial plane to the first order by $f\sim g:\Longleftrightarrow f=g+{\cal O}(\rho^{2})$ for functions $f$ and $g$, and $\bm X\sim \bm Y:\Longleftrightarrow X^{\mu}=Y^{\mu}+{\cal O}(\rho^{2})$ for vector fields $\bm X$ and $\bm Y$.
We suppose that the perfect fluid flow admits the translational symmetries associated with $t$ and $\phi$, and reversal symmetries associated with 
$(\phi_{0},\cdots,\phi_{d-3})$, 
i.e.
\begin{subequations}
\begin{eqnarray}
P&\sim&P(r),\label{P:c}\\
s&\sim&s(r),\label{s:c}\\
\bm u&\sim&u^{t}(r)\bm\partial_{t}+u^{r}(r)\bm\partial_{r}+u^{\phi}(r)\bm\partial_{\phi}+\sum_{i=0}^{d-3}\phi_{i}\,u^{\phi_{i}}(r)\bm\partial_{\phi_{i}},
\end{eqnarray}
\end{subequations}
and we also suppose that the flow also satisfies the following additional conditions:
\begin{subequations}
\begin{eqnarray}
u^r(r)&\neq&0,\\
u^{\phi_{i}}(r)&=&0 \quad (i=0,\cdots, d-3).
\label{add}
\end{eqnarray}
\end{subequations}
The former condition leads to
\begin{equation}
s\sim const.\nonumber
\end{equation}
from Eq.~\eqref{ise}.
In Ref.~\cite{Koga:2018ybs}, the additional condition~\eqref{add} was imposed as the condition of uniform matter distribution of the disk. 

So far, we did not specify the observer $(d+1)$-velocity $\bm u_{\rm obs}$ nor the Killing vector field $\bm \xi$. 
Here, we reemphasize that $\bm u_{\rm obs}$ and $\bm \xi$ are not identical to each other in general, but 
satisfy $\bm u_{\rm obs}|_p=\bar{\bm \xi}|_p$ at the point $p$ under consideration. 
In the present setting, let us consider a set of co-rotating observers. 
The vector field associated with the co-rotating observers takes the following form on the equatorial plane:
\begin{equation}
\bm u_{\rm obs}\sim\gamma(r)\{\bm\partial_{t}+\omega(r)\bm\partial_{\phi}\}, 
\label{uo}
\end{equation}
where $\omega(r):=u^{\phi}(r)/u^{t}(r)$ 
denotes the angular velocity of the observers, and $\gamma(r)$ is the normalization factor which ensures $\bm u_{\rm obs}\cdot\bm u_{\rm obs}=-1$.
We note that the model of the rotating flow in the $G=SO(d)$ case~\cite{Koga:2018ybs} has not been extended to the $G=E(d-1)$ and $G=SO(d-1,1)$ cases although one can easily predict the quite similar results of the sonic point/photon surface correspondence to Ref.~\cite{Koga:2018ybs}. 
In specifying the speed of the flow, the orthogonal decomposition of $\bm u$ with respect to $\bm u_{\rm obs}$ is performed, and the spacelike unit vector field $\bar{\bm w}$ orthogonal to $\bm u_{\rm obs}$ defined from the orthogonal decomposition obeys the following equivalence relation:
\begin{equation}
\bar{\bm w}\sim g_{rr}^{-1/2}\bm\partial_{r}.
\label{w}
\end{equation}
Therefore, we can regard $\bar{\bm w}$ as the 
unit vector field normal to $r=const.$ hypersurfaces 
in the vicinity of the equatorial plane.  

For each point $p\in\{\phi_{0}=\cdots=\phi_{d-3}=\pi/2\}$ on the equatorial plane, we employ the Killing vector field 
\begin{equation}
\bm\xi:=\bm\partial_{t}+\omega(r_{p})\bm\partial_{\phi}, 
\label{xi:check}
\end{equation}
where $r_{p}$ denotes the radius 
at $p$.
We get the spacelike vector field $\bar{\bm\eta}$ that is orthogonal to $\bm\xi$ and obeys
\begin{equation}
\bar{\bm\eta}\sim\bar{\eta}\,^{t}(r)\bm\partial_{t}+\bar{\eta}\,^{r}(r)\bm\partial_{r}+\bar{\eta}\,^{\phi}(r)\bm\partial_{\phi}
\label{ex_eta}
\end{equation}
on a neighborhood $U$ of $p$. 
By construction, $\bm u_{\rm obs}$ coincides with $\bar{\bm \xi}$ at $p$, and simultaneously $\bar{\bm w}$ coincides with $\bar{\bm\eta}$ at $p$.
Comparing Eqs.~\eqref{xi:check} and \eqref{ex_eta} at $p$, we obtain $\bar{\eta}\,^{t}(r_{p})=\bar{\eta}\,^{\phi}(r_{p})=0$, and therefore
\begin{equation}
\left(\bar{\bm w}-\bar{\bm\eta}\right)|_{S}\sim0,\label{weta}
\end{equation}
where $S=\{r=r_{p}\}$ is the timelike hypersurface of constant radius
including $p$. 
In other words, we can regard $\bar {\bm \eta}$ as the unit vector normal to $S$, 
and identify it with $\bar {\bm w}$ in the vicinity of $p$. 
Then $\bm B^{(S)\perp}$ can be also identified with $\bm \chi$ at $p$:
\begin{equation}
\left(\bm \chi-\bm B^{({\rm S})\perp}\right)|_p = 0, 
\end{equation}
where $\bm \chi$ is the second fundamental form of $S$.
As a consequence, on each point on the equatorial plane, the timelike hypersurface $S$ of constant $r$ including the point is the proper section, defined in Sec.~\ref{III1}, according to Proposition~\ref{prop:S}.

Hereafter, we investigate the necessary conditions that the steady perfect fluid flow must satisfy at the sonic point for this model.
First of all, we show the following lemma for the 
proper section $S$ which plays an important role in the sonic point/photon surface correspondence:
\begin{lem}
Let $S=\{r=r_{p}\}$ be a timelike hypersurface of constant $r$ in the spacetime of the metric~\eqref{met} with \eqref{max}, and let  $\bm\sigma_{\bm\chi}$ be the trace-free part of the second fundamental form $\bm\chi$ for $S$.
For any tangent vectors $\bm X,\bm Y\in T_{p}S$ such that $X^{\mu}, Y^{\mu}\neq0$, the following statements for $\bm\sigma_{\bm\chi}$ hold:
\begin{itemize}
\item
$\bm\sigma_{\bm\chi}(\bm X,\bm X)=0$ if and only if $\bm\sigma_{\bm\chi}(\bm Y,\bm Y)=0$,
\item
$\bm\sigma_{\bm\chi}(\bm X,\bm X)>0$ if and only if $\bm\sigma_{\bm\chi}(\bm Y,\bm Y)>0$,
\item
$\bm\sigma_{\bm\chi}(\bm X,\bm X)<0$ if and only if $\bm\sigma_{\bm\chi}(\bm Y,\bm Y)<0$.
\end{itemize}
\label{lem:2}
\end{lem}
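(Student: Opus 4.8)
The plan is to compute the second fundamental form of $S=\{r=r_{p}\}$ directly and observe that, because of the warped-product form of the metric \eqref{met}--\eqref{max}, $\bm\chi$ has only two distinct eigenvalues relative to the induced metric $\bm h$; hence $\bm\sigma_{\bm\chi}$ at $p$ is governed by a single scalar, and the three claimed equivalences reduce to a sign check of a Lorentzian quadratic form.

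First I would take the unit normal $\bar{\bm m}=g_{rr}^{-1/2}\bm\partial_{r}$ and note that the induced metric $\bm h$ on $S$ is block diagonal: a one-dimensional timelike block $g_{tt}(r_{p})\dd t^{2}$ and a $(d-1)$-dimensional Riemannian block $\bm\gamma|_{r=r_{p}}$ of constant curvature. Since the metric \eqref{met}--\eqref{max} is diagonal in $(t,\theta,\phi_{1},\dots,\phi_{d-2})$, and since differentiating the angular block in $r$ just multiplies it by $2/r$ while differentiating $g_{tt}\dd t^{2}$ multiplies it by $g_{tt}'/g_{tt}$, the extrinsic curvature $\chi_{ab}=\tfrac12 g_{rr}^{-1/2}\partial_{r}g_{ab}$ is diagonal in the same basis, with
\[
\chi^{a}{}_{b}=\operatorname{diag}\!\left(\lambda_{t},\lambda_{\mathrm{ang}},\dots,\lambda_{\mathrm{ang}}\right),\qquad \lambda_{t}=\frac{g_{tt}'}{2g_{tt}}g_{rr}^{-1/2},\quad \lambda_{\mathrm{ang}}=\frac{g_{rr}^{-1/2}}{r},
\]
all evaluated at $r_{p}$, the eigenvalue $\lambda_{t}$ having multiplicity one (the $t$-direction) and $\lambda_{\mathrm{ang}}$ having multiplicity $d-1$ (the angular directions). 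Passing to the trace-free part with $H=\tfrac1d(\lambda_{t}+(d-1)\lambda_{\mathrm{ang}})$ gives $(\sigma_{\bm\chi})^{t}{}_{t}=\tfrac{d-1}{d}\mu$ and $(\sigma_{\bm\chi})^{i}{}_{i}=-\tfrac1d\mu$ on the angular directions, where $\mu:=(\lambda_{t}-\lambda_{\mathrm{ang}})|_{r_{p}}$ is a real number depending only on $r_{p}$, not on any tangent vector.

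Next, for an arbitrary $\bm X=X^{t}\bm\partial_{t}+\bm X_{\perp}\in T_{p}S$ with angular part $\bm X_{\perp}$, lowering indices with $\bm h$ yields
\[
\bm\sigma_{\bm\chi}(\bm X,\bm X)=\frac{\mu}{d}\Big[(d-1)g_{tt}(r_{p})(X^{t})^{2}-\bm\gamma(\bm X_{\perp},\bm X_{\perp})\Big].
\]
Because $g_{tt}(r_{p})<0$ and $\bm\gamma|_{r_{p}}$ is positive definite, the bracket is $\le 0$, and it vanishes only when $\bm X=0$; in particular it is strictly negative whenever $X^{\mu}\neq 0$. Hence $\operatorname{sgn}\bm\sigma_{\bm\chi}(\bm X,\bm X)=-\operatorname{sgn}\mu$ for every admissible $\bm X$, and since $\mu$ is independent of $\bm X$ the three bulleted equivalences follow immediately, with $\bm X$ and $\bm Y$ interchangeable. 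Note also that the argument never uses the causal character of $\bm X$, so it applies to null vectors as well, which is what will be needed when comparing with the photon-surface conditions.

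The computation is essentially routine; the only points requiring care are the bookkeeping of the block structure — checking that $\bm\chi$ really does have just the two eigenvalues above, which is where the specific $r^{2}$-dependence of the angular block \eqref{max} enters — and the use of the Lorentzian signature of $\bm h$ (the lone minus sign in the $t$-block) to conclude that the quadratic form in the bracket is sign-definite. I would also record that the statement is non-vacuous only for $d\ge 2$, which is assumed throughout, and that $p$ is tacitly taken to be a regular point of the angular coordinates.
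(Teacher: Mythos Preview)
Your proof is correct and follows essentially the same route as the paper: compute $\bm\chi$ explicitly for the warped-product metric, observe that it has only the two eigenvalues $\lambda_{t}$ and $\lambda_{\mathrm{ang}}$, pass to the trace-free part, and conclude that $\bm\sigma_{\bm\chi}(\bm X,\bm X)$ is a single scalar (your $\mu/d$, the paper's $\Lambda-H$) times a sign-definite Lorentzian quadratic form in $\bm X$. The only differences are cosmetic---you work with the mixed tensor $(\sigma_{\bm\chi})^{a}{}_{b}$ and the eigenvalue gap $\mu=\lambda_{t}-\lambda_{\mathrm{ang}}$, while the paper uses the covariant components and the combination $\Lambda-H=-\mu/d$---and your closing remark that the causal type of $\bm X$ is irrelevant anticipates exactly how the lemma is later applied to null vectors.
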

\begin{proof}
In the following, we substitute $r=r_{p}$ into all the functions of $r$ without writing explicitly for convenience.
The components of $\bm \chi$ are given as
\begin{subequations}
\begin{eqnarray}
\chi_{ij}&=&\Lambda g_{ij},\\
\chi_{it}&=&0,\\
\chi_{tt}&=&\frac{1}{2}(g_{rr})^{-1/2}\frac{\rm d}{{\rm d}r}g_{tt},
\end{eqnarray}
\end{subequations}
where $i,j\in\{\theta,\phi_{1},\cdots,\phi_{d-3},\phi\}$, and $\Lambda$ is defined by $\Lambda^{-1}:=r\sqrt{g_{rr}}$, and the components of $\bm\sigma$ are given as
\begin{subequations}
\begin{eqnarray}
(\sigma_{\bm\chi})_{ij}&=&\left(\Lambda-H\right)g_{ij},\\
(\sigma_{\bm\chi})_{it}&=&0,\\
(\sigma_{\bm\chi})_{tt}&=&\chi_{tt}-Hg_{tt},\label{chi:eps1}
\end{eqnarray}
\end{subequations}
where $H$ denotes the mean curvature for $S$.
Here, the definition of $H$ gives another expression of $\chi_{tt}$ as follows:
\begin{equation}
Hd=\Lambda(d-1)+g^{tt}\chi_{tt}.
\label{chi:eps2}
\end{equation}
Applying Eq.~\eqref{chi:eps2} to Eq.~\eqref{chi:eps1} gives
\begin{equation}
(\sigma_{\bm\chi})_{tt}=\left(\Lambda-H\right)(d-1)|g_{tt}|.
\end{equation}
The $(\bm X,\bm X)$ component 
of $\bm\sigma_{\bm \chi}$ is given as
\begin{equation}
\bm\sigma_{\bm\chi}(\bm X,\bm X)=\left(\Lambda-H\right)\left\{(d-1)|g_{tt}|\left(X^{t}\right)^{2}+g_{ij}X^{i}X^{j}\right\}.
\end{equation}
The curly bracket is positive as long as $X^{\mu}\neq0$.
Therefore, the coefficient $\Lambda-H$ determines the sign independently of $\bm X$.
\end{proof}
%
\noindent
The first statement of the equality ``$\,=\,$'' in Lemma~\ref{lem:2} implies that $\bm\sigma(\bar{\bm\xi},\bar{\bm\xi})|_{p}=0$ holds for the congruence of streamlines if and only if $\bm\sigma_{\bm\chi}=0$ holds for the 
proper section 
$\{r=r_{p}\}$.
The mean curvature $H$ takes the positive value $H=\Lambda(r_{p})>0$, and we arrive at this result: the speed of sound at the sonic point $p$ is given by $1/\sqrt{d}$ if and only if the proper section 
$\{r=r_{p}\}$ is a photon surface
, where the
`if' part comes from Proposition~\ref{prop:6}, and the `only if' part comes from Proposition~\ref{prop:5} and Lemma~\ref{lem:2}.
We also find that the sonic point for the radiation fluid must be on a photon surface.

Hereafter, we consider the radiation fluid. 
The flow fulfills both of the assumptions~\eqref{eta:m} and \eqref{dSPsv}, 
i.e. Proposition~\ref{prop:7} applies to this model.
We investigate the inequality~\eqref{con2:S} that the proper section must satisfy for this model, 
and show that the inequality~\eqref{con2:S} implies the inequality \eqref{eq:pscon2} in the present setting. 
We first deform $A_{2}$.
Here, we start with the expression~\eqref{A2} of $A_{2}$ in terms of $\bm a$ instead of Eq.~\eqref{A2:prop}.
The vector field $\bm a$ is orthogonal to $\bar{\bm\eta}$ from Eq.~\eqref{a eta}, and $\bar{\bm\eta}$ is regarded as 
the vector normal to $S=\{r=r_{p}\}$
in the vicinity of the equatorial plane. 
Therefore, we obtain $a^{r}(r_{p})\sim0$, and we find the $r$ component of $\bm a$ vanishes at $p$.
Introducing the determinant $g$ of the metric $\bm g$ in the coordinate system $(t,r,\theta,\phi_{1},\cdots,\phi_{d-3},\phi)$, we can deform the expression~\eqref{A2} of $A_{2}$ to
\begin{align}
A_{2}&=\Big\{a^{\mu}\partial_{\mu}\left(\ln\sqrt{|\xi^{\nu}\xi_{\nu}|}\right)+|g|^{-1/2}\partial_{\mu}\left(|g|^{1/2}a^{\mu}\right)\Big\}\Big|_{p}\notag\\
&=\partial_{r}a^{r}\,|_{p}\notag\\
&={\cal L}_{\bar{\bm\eta}}\bm a\cdot\bar{\bm\eta}\,|_{p}.
\label{A2:ineq}
\end{align}
Here, we consider the identity ${\cal L}_{\bar{\bm\eta}}\left[\nabla_{\bar{\bm\eta}}(\bar{\bm\eta}\cdot\bar{\bm\eta})\right]=0$ which is deformed to
\begin{equation}
{\cal L}_{\bar{\bm\eta}}\bm a\cdot\bar{\bm\eta}+\bm a\cdot\bm a=0.
\label{A2:identity}
\end{equation}
The first term of Eq.~\eqref{A2:identity} is identical to $A_{2}$.
From Eq.~\eqref{a xi}, $\bm a$ is spacelike, and therefore we have the inequality
\begin{equation}
A_{2}\le0.
\label{A2:ineq2}
\end{equation}
We also have the expression~\eqref{A2:prop} of $A_{2}$ in terms of the pressure $P$.
Given the metric~\eqref{met} with \eqref{max} of the spacetime, we get
\begin{subequations}
\begin{eqnarray}
\Delta_{S}\left(\ln\sqrt{|\bm\xi\cdot\bm\xi|}\right)&=&0,\\
{\rm Ric}(\bar{\bm m},\bar{\bm\xi})&=&0,
\end{eqnarray}
\end{subequations}
where $\bar{\bm m}$ is the unit vector normal to $S$.
Eq.~\eqref{A2:prop} reduces to
\begin{equation}
A_{2}=-\frac{\left(\frac{1}{d}\right)\left(1-\frac{1}{d}\right)}{\left(1+\frac{1}{d}\right)}P^{-1}\Delta_{S}P~\Big|_{p}.
\end{equation}
The inequality~\eqref{A2:ineq2} leads to $\Delta_{S}P|_{p}\ge0$
\footnote{
\baselineskip 4.5mm
Remark that the thin disk model in Ref.~\cite{Koga:2018ybs} dealt with the equality case $\Delta_{S}P\,|_{p}=0$  by assuming uniform matter distribution in the angular directions.}. 

Applying the inequality~\eqref{A2:ineq2} to Proposition~\ref{prop:7} gives the inequality for $A_{3}$:
\begin{equation}
A_{3}\ge-A_{2}\ge0
\label{A3:ineq}
\end{equation}
as a necessary condition.  
$A_{3}$ is originally a part of the left-hand side of the inequality~\eqref{con1:rad}, and we performed the 
decomposition of the inequality~\eqref{con1:rad} in Appendix~\ref{app:ineq}.
In this model, $A_{1}$ vanishes, and therefore $A_{3}$ is the remaining part of the inequality for the congruence of the streamlines in addition to  
$A_{2}$, 
which contains all contributions from $\bm a$. 
In other words, $A_{3}$ 
does not depend on $\bm a$, and the value of $A_3$ is shared by the shear tensor $\bm \sigma_{\bm \chi}$ 
associated with the Gaussian normal foliation with the normal vector $\bar{\bm m}=\bar{\bm w}$ 
adopted in Theorem~\ref{thm:stbsig}. 
Therefore, hereafter, we consider $A_3$ as a variable associated with the Gaussian normal coordinate.  
We can rewrite $A_{3}$ as
\begin{equation}
A_{3}={\cal L}_{\bar{\bm w}}\left[\bm\sigma_{\bm \chi}(\bar{\bm\xi},\bar{\bm\xi})\right]\Big|_{p}. 
\label{A3:ineq3}
\end{equation}
One practically finds that Eq.~\eqref{A3:ineq3}
can be shown by performing the deformation of the right-hand side of Eq.~\eqref{A3:ineq3} together with Eq.~\eqref{w}, following Appendix~\ref{app:ineq}
\footnote{
\baselineskip 4.5mm
In Appendix~\ref{app:ineq}, we performed the decomposition ${\cal L}_{\bar{\bm\eta}}[\bm\sigma(\bar{\bm\xi},\bar{\bm\xi})]|_{p}=C_{1}+C_{2}+C_{3}=A_{1}+A_{2}+A_{3}$.
In the same manner, we can also perform the decomposition ${\cal L}_{\bar{\bm w}}[\bm\sigma_{\bm\chi}(\bar{\bm\xi},\bar{\bm\xi})]|_{p}=(C_{\bar{\bm w}})_{1}+(C_{\bar{\bm w}})_{2}+(C_{\bar{\bm w}})_{3}$ for the Gaussian normal foliation, where $(C_{\bar{\bm w}})_{1}:={\cal L}_{\bar{\bm w}}H|_{p}$, $(C_{\bar{\bm w}})_{2}:=(\nabla_{\bar{\bm w}}\bm\chi)(\bar{\bm\xi},\bar{\bm\xi})|_{p}$ and $(C_{\bar{\bm w}})_{3}:=2\bm\chi(\nabla_{\bar{\bm\xi}},\bar{\bm\xi})|_{p}$ are calculated to be
$(C_{\bar{\bm w}})_{1}=\left\{-H^{2}-\frac{1}{d}{\rm Ric}(\bar{\bm\eta},\bar{\bm\eta})\right\}\Big|_{p}$,
$(C_{\bar{\bm w}})_{2}=\Big\{H^{2}-R(\bar{\bm w},\bar{\bm\xi},\bar{\bm w},\bar{\bm\xi})\Big\}\Big|_{p}$ and
$(C_{\bar{\bm w}})_{3}=0$, respectively.
We used $\sigma_{\bm\chi}|_{p}=\bm\omega_{\bm\chi}|_{p}=\bm a_{\bm\chi}|_{p}=0$ from the assumptions in Sec.~\ref{IIIc}.
We have Eq.~\eqref{A3:ineq3}: ${\cal L}_{\bar{\bm w}}[\bm\sigma_{\bm\chi}(\bar{\bm\xi},\bar{\bm\xi})]|_{p}=\left\{-\frac{1}{d}{\rm Ric}(\bar{\bm\eta},\bar{\bm\eta})-R(\bar{\bm w},\bar{\bm\xi},\bar{\bm w},\bar{\bm\xi})\right\}|_{p}=A_{3}$.
}.
Here, we label the $r=const.$ hypersurfaces around the sonic point $p$ by a real number $\epsilon$ as $S_{\epsilon}:=\{r=r_{p}+\epsilon\}$, and we denote the trace-free part of the second fundamental form for 
$S_{\epsilon}$ by $\bm\sigma_{\epsilon}$.
We have $S_{0}=S$ and 
$\bm\sigma_{0}=\bm\sigma_{\bm\chi}|_p$ 
with this notation.
Eq.~\eqref{A3:ineq3} is further deformed to be
\begin{equation}
A_{3}=\lim_{\epsilon\to0}\frac{\bm\sigma_{\epsilon}(\bar{\bm\xi},\bar{\bm\xi})-\bm\sigma_{0}(\bar{\bm\xi},\bar{\bm\xi})}{\sqrt{g_{rr}(r_{p})}~\epsilon}
=\lim_{\epsilon\to0}\frac{\bm\sigma_{\epsilon}(\bar{\bm\xi},\bar{\bm\xi})}{\sqrt{g_{rr}(r_{p})}~\epsilon}, 
\label{A3:ineq4}
\end{equation}
where we have used $\bm\sigma_{0}(\bar{\bm\xi},\bar{\bm\xi})=0$. 
Recall that $A_{3}$ must satisfy the inequality~\eqref{A3:ineq}.
Applying the second statement of Lemma~\ref{lem:2} to the
inequality~\eqref{A3:ineq} with the expression~\eqref{A3:ineq4} of
$A_{3}$, we arrive at the following inequality including an arbitrary
null vector field 
$\bm k$ on $M$s:
\begin{equation}
{\cal L}_{\bar{\bm w}}\left[\bm\sigma_{\bm \chi}(\bm k,\bm k)\right]\Big|_{p}\ge0.
\label{inst}
\end{equation}
In the investigation of the stability of the photon surface, we consider the deviation of null geodesics whose initial tangent vectors are parallelly transported from each other.
Applying $\nabla_{\bar{\bm w}}\bm k|_{p}=0$ to the inequality~\eqref{inst} gives
\begin{equation}
(\nabla_{\bar{\bm w}}\bm\sigma_{\bm \chi})(\bm k,\bm k)\Big|_{p}\ge0,
\label{inst2}
\end{equation}
where we have supposed that $\bm k\cdot\bm k|_{p}=0$, and $\bm k|_{p}\in T_{p}S$.
The inequality~\eqref{inst2} is exactly the alternative definition of an unstable or marginally stable photon surface in Theorem~\ref{thm:stbsig}.
Remarkably, the inequality~\eqref{A3:ineq} states that $S$ is a marginally stable photon surface only if $\Delta_{S}P|_{p}=0$, which is fulfilled if the matter distribution on $S$ in the vicinity of $p$ is uniform in the angular directions to the second order.
Now we have arrived at the sonic point/photon surface correspondence:

\begin{thm}[Sonic point/photon surface correspondence for steady rotating flow (including Refs.~\cite{Koga:2016jjq,Koga:2018ybs,Koga:2019teu})]
Consider a (d+1)-dimensional spacetime equipped with the metric~\eqref{met} with \eqref{max}, a steady rotating flow of the radiation fluid, which admits the $t$- and $\phi$-translational symmetries and the reversal symmetries in the other angular directions together with the additional condition~\eqref{add}, and the observer on the equatorial plane co-rotating with the flow.
If the flow is of class $C^{2}$ at the sonic point on the equatorial plane, the sonic point of the flow on the equatorial plane must be on an $r=const.$ photon surface that is either unstable or marginally stable.
\label{thm:2}
\end{thm}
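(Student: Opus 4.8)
The plan is to assemble the machinery built up in Sec.~\ref{IIIc}, since the hypotheses place us exactly in that framework. First I would check that, with the metric~\eqref{met}--\eqref{max}, the co-rotating observer~\eqref{uo}, and the Killing field $\bm\xi=\bm\partial_{t}+\omega(r_{p})\bm\partial_{\phi}$ of Eq.~\eqref{xi:check}, the relation~\eqref{weta} together with Proposition~\ref{prop:S} identifies the constant-radius hypersurface $S=\{r=r_{p}\}$ through the sonic point $p$ as the proper section of the streamline congruence, with $\bar{\bm\eta}$ coinciding with the unit normal $\bar{\bm w}$ near $p$ and $\bm\chi=\bm B^{({\rm S})\perp}$ at $p$. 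This reduces the statement to the two necessary conditions of Theorem~\ref{prop:2}, now read as conditions on $S$.

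For the photon-surface part I would argue as follows. Because the fluid is a radiation fluid, its equation of state forces $v_{\rm s}=1/\sqrt{d}$ everywhere, and Theorem~\ref{prop:2} gives $\bm\sigma(\bar{\bm\xi},\bar{\bm\xi})|_{p}=0$. Using $\bm\sigma|_{p}=\bm\sigma_{\bm\chi}|_{p}$ and the fact that the mean curvature $H=\Lambda(r_{p})>0$ is nonzero, the equality statement of Lemma~\ref{lem:2} upgrades this to $\bm\sigma_{\bm\chi}|_{p}=0$, so $S$ is totally umbilical at $p$ and hence a photon surface by Theorem~\ref{def:2}; this is the sonic-point/photon-surface equivalence that Propositions~\ref{prop:5} and \ref{prop:6} together with Lemma~\ref{lem:2} establish in this geometry. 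Since trivially $p\in S$, the sonic point lies on the photon surface $S$.

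For the stability part I would invoke Proposition~\ref{prop:7}, whose hypotheses~\eqref{eta:m} and \eqref{dSPsv} hold here, to write the second necessary condition --- which uses the assumed $C^{2}$ regularity of the flow at $p$ --- as $A_{1}+A_{2}+A_{3}\ge0$. The reversal symmetry forces $A_{1}=0$; the identity~\eqref{A2:identity} together with the spacelike character~\eqref{a xi} of the acceleration $\bm a$ gives $A_{2}=-\bm a\cdot\bm a\le0$; hence $A_{3}\ge-A_{2}\ge0$ by Eq.~\eqref{A3:ineq}. Since $A_{3}$ carries no dependence on $\bm a$, it equals ${\cal L}_{\bar{\bm w}}[\bm\sigma_{\bm\chi}(\bar{\bm\xi},\bar{\bm\xi})]|_{p}$ for the Gaussian-normal foliation of constant $r$ (Eq.~\eqref{A3:ineq3}), i.e. the limit of $\bm\sigma_{\epsilon}(\bar{\bm\xi},\bar{\bm\xi})/(\sqrt{g_{rr}(r_{p})}\,\epsilon)$ in Eq.~\eqref{A3:ineq4}. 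Feeding $A_{3}\ge0$ through the positivity statement of Lemma~\ref{lem:2} promotes the inequality to every null $\bm k\in T_{p}S$, giving ${\cal L}_{\bar{\bm w}}[\bm\sigma_{\bm\chi}(\bm k,\bm k)]|_{p}\ge0$; choosing $\bm k$ parallelly transported along $\bar{\bm w}$ then yields $(\nabla_{\bar{\bm w}}\bm\sigma_{\bm\chi})(\bm k,\bm k)|_{p}\ge0$ (Eq.~\eqref{inst2}), which by Theorem~\ref{thm:stbsig} is precisely the condition that $S$ be an unstable or marginally stable photon surface, as claimed.

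The step I expect to be the main obstacle is the bookkeeping that cleanly separates $\bar{\bm\eta}$ --- defined only pointwise at $p$ and not hypersurface-orthogonal in general --- from the genuine unit normal field $\bar{\bm w}$ of the constant-$r$ foliation, so that the Lie derivative ${\cal L}_{\bar{\bm\eta}}[\bm\sigma(\bar{\bm\xi},\bar{\bm\xi})]$ occurring in the sonic-point condition can be matched with the Gaussian-normal derivative ${\cal L}_{\bar{\bm w}}[\bm\sigma_{\bm\chi}(\bar{\bm\xi},\bar{\bm\xi})]$ demanded by Theorem~\ref{thm:stbsig}. This is precisely where the $A_{1}$--$A_{2}$--$A_{3}$ split earns its keep: the pieces distinguishing the two derivatives are $A_{1}$, annihilated by the reversal symmetry, and $A_{2}$, the acceleration term, which \eqref{A2:identity} shows only strengthens the inequality, leaving the acceleration-free geometric piece $A_{3}$ common to both foliations. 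A secondary point is verifying that the ``all components nonzero'' hypothesis of Lemma~\ref{lem:2} is genuinely met for the relevant null vectors, so that one may legitimately pass from the particular direction $\bar{\bm\xi}$ to a general null $\bm k$.
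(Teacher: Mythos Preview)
Your proposal is correct and follows essentially the same route as the paper's Sec.~\ref{IIIc}: identify $S=\{r=r_p\}$ as the proper section via Proposition~\ref{prop:S}, upgrade the first sonic condition to $\bm\sigma_{\bm\chi}|_p=0$ via Lemma~\ref{lem:2}, then use the $A_1$--$A_2$--$A_3$ decomposition with $A_1=0$ and $A_2=-\bm a\cdot\bm a\le 0$ (the paper reaches this via the coordinate computation~\eqref{A2:ineq} before invoking~\eqref{A2:identity}) to force $A_3\ge 0$, and promote this to all null $\bm k$ via Lemma~\ref{lem:2} to conclude with Theorem~\ref{thm:stbsig}. Your secondary worry about the hypothesis of Lemma~\ref{lem:2} is a misreading: $X^\mu\neq 0$ there simply means the vector $\bm X$ is nonzero (as the last line of the proof confirms), which any null $\bm k$ automatically satisfies.
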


\section{summary}
\label{IV}

We considered a transonic steady perfect fluid flow 
associated 
with fiducial observers in a $(d+1)$-dimensional  general stationary spacetime. 
The stationary flow of perfect fluid and fiducial observers are defined 
in terms of the timelike Killing vector field. The speed of the fluid flow is 
defined by the fluid velocity relative to the fiducial observer.
In the neighborhood of each point, 
we 
defined streamlines as the spacelike integral curves 
given by
the projection of the fluid $(d+1)$-velocity onto the spatial direction orthogonal to 
the Killing vector field. 
We showed that the congruence of streamlines must satisfy the equation~\eqref{con1} and the inequality~\eqref{con2} 
at the sonic point, both of which are conditions for the function $\Phi$ defined in Eq.~\eqref{Phi0}. 
The first condition~\eqref{con1} comes from that the speeds of the fluid
and sound must be equal to each other, 
and the second condition~\eqref{con2} ensures that the solution exists at least in the vicinity of the sonic point.
It is worth mentioning that, in the phase space analysis, those conditions require that the point in the phase space to be the saddle point of the Hamiltonian, as we show in Appendix~\ref{app:phase}.
Furthermore, it is remarkable that $\Phi$ reduces to the time-time component of the shear tensor for the streamline congruence at the sonic point if the speed of sound at the sonic point is given by $1/\sqrt{d}$, which leads to Proposition~\ref{prop:5}.
In particular, 
$\Phi$ is the time-time component of the shear tensor at every point if we consider 
the radiation fluid, 
whose speed of sound is given by $1/\sqrt{d}$, 
and the necessary conditions are reduced to 
Eqs.~\eqref{con1:rad} and \eqref{con2:rad} 
as in Theorem~\ref{prop:2}.

Then we considered a photon surface, which is defined as a totally umbilical non-spacelike hypersurface, as the proper section 
at 
the sonic point, 
where the proper section is 
defined as the codimension-1 foliation satisfying the property given in Definition\ref{def:section} 
associated with a streamline of the fluid flow congruence. 
A timelike photon surface, 
as well as the locus of the sonic point 
for the steady radiation fluid flow, is 
defined in
terms of the trace-free part of the second fundamental form.  
Therefore, the shear tensor for the 
proper section 
and the trace-free part of the second fundamental form 
of 
the photon surface 
are essential in the sonic point/photon surface correspondence. 
We found that 
the first condition~\eqref{con1} for the sonic point 
does not imply that all the components of the trace-free part of the second fundamental form vanish, i.e. the sonic point is not necessarily umbilical, which implies that the sonic point/photon surface correspondence requires 
spatial symmetry.
Actually, in the known
results~\cite{Koga:2016jjq,Koga:2018ybs,Koga:2019teu} of the sonic
point/photon surface correspondence, the spacetime admits spatial
spherical, planar or hyperbolic symmetry together with staticity, 
and the symmetry eliminates redundant degrees of freedom in the condition for the photon surface. 
We also showed that the second condition for the sonic point, obtained as an inequality, 
can be rewritten as an inequality for the pressure of the fluid as a function on the photon surface. 

Finally, we presented the sonic point/photon surface correspondence in terms of the shear tensor based on Refs.~\cite{Koga:2016jjq,Koga:2018ybs,Koga:2019teu} 
showing that 
the sonic point is on a photon surface of constant $r$ if and only if the speed of sound at the sonic point is given by $1/\sqrt{d}$ for any perfect fluid.
This implies that the perfect fluid with any equation of state necessarily behaves like the radiation fluid on the sonic points.

Last but not least, we 
assumed 
at least $C^{1}$ smoothness at the sonic point on the flow throughout this paper, 
and excluded the possibility of the transonic shock at the sonic point. 
Therefore, 
a weak solution in the neighborhood of the sonic point 
has not been taken into account
in our analysis. 
If we allow the existence of the shock at the sonic point in Theorem~\ref{thm:2}, 
there might be a sonic point on a stable photon surface or even off the photon surfaces.

\section*{Acknowledgements}
We are grateful to Y. Katou for useful discussion.
This work was supported by the JSPS Grant-in-Aid for Scientific Research No.~JP19H01895 (C.Y. and T.H.) and JP19K03876 (T.H.) and the JSPS Grant-in-Aid for JSPS Fellows No.~JP19J12007 (Y.K.).
\appendix

\section{Deformation of the energy-momentum conservation law}

In this appendix, we perform the deformation of the energy-momentum conservation law~\eqref{energy} for the steady perfect fluid flow. 
For a vector field $\bm X $ on $U$, the energy-momentum conservation law~\eqref{energy} contracted with 
$\bm X$ is given by
\begin{equation}
\nabla\cdot[nh\bm u(\bm u\cdot\bm X)]-nh\bm u\cdot\nabla_{\bm u}\bm X+{\cal L}_{\bm X}P=0.
\label{X0}
\end{equation}
Applying the continuity equation~\eqref{cont} to the first term of Eq.~\eqref{X0} gives
\begin{equation}
n{\cal L}_{\bm u}[h(\bm u\cdot\bm X)]-nh\bm u\cdot\nabla_{\bm u}\bm X+{\cal L}_{\bm X}P=0.
\label{X}
\end{equation}

\subsection{The $\bar{\bm\eta}$ component}
\label{app:de laval}

Substituting $\bm X=\bar{\bm\eta}$ into Eq.~\eqref{X} yields
\begin{equation}
n{\cal L}_{\bm u}\left[h(\bm u\cdot\bar{\bm\eta})\right]-nh\bm u\cdot\nabla_{\bm u}\bar{\bm\eta}+{\cal L}_{\bar{\bm\eta}}P=0.
\label{X:eta}
\end{equation}
We rewrite the first two terms of Eq.~\eqref{X:eta} by applying the orthogonal decomposition~\eqref{decomp} of $\bm u$.
The first term is deformed to be
\begin{align}
{\cal L}_{\bm u}\left[h(\bm u\cdot\bar{\bm\eta})\right]
&=\left(\bm u\cdot\bar{\bm\eta}\right){\cal L}_{\bar{\bm\eta}}\left[h(\bm u\cdot\bar{\bm\eta})\right]\notag\\
&=\left(\bm u\cdot\bar{\bm\eta}\right)^{2}\Big\{{\cal L_{\bar{\bm\eta}}}h+h{\cal L}_{\bar{\bm\eta}}\Big(\ln|\bm u\cdot\bar{\bm\eta}|\Big)\Big\}.
\label{X:eta:1term}
\end{align}
From Eq.~\eqref{v}, the overall factor of Eq.~\eqref{X:eta:1term} is rewritten as $\left(\bm u\cdot\bar{\bm\eta}\right)^{2}=v^{2}\left(1-v^{2}\right)^{-1}$, so that we obtain
\begin{equation}
{\cal L}_{\bm u}[h(\bm u\cdot\bar{\bm\eta})]=v^{2}(1-v^{2})^{-1}\Big\{{\cal L_{\bar{\bm\eta}}}h+h{\cal L}_{\bar{\bm\eta}}\Big(\ln|\bm u\cdot\bar{\bm\eta}|\Big)\Big\}.
\label{X:eta1}
\end{equation}
The second term of Eq.~\eqref{X:eta} is deformed to be
\begin{align}
\bm u\cdot\nabla_{\bm u}\bar{\bm\eta}&=\bm B(\bm u,\bm u)\notag\\
&=\left(1-v^{2}\right)^{-1}\Big\{\bm B(\bar{\bm\xi},\bar{\bm\xi})
+v\bm B(\bar{\bm\xi},\bar{\bm\eta})+v\bm B(\bar{\bm\eta},\bar{\bm\xi})+v^{2}\bm B(\bar{\bm\eta},\bar{\bm\eta})\Big\}\notag\\
&=\left(1-v^{2}\right)^{-1}\bm B(\bar{\bm\xi},\bar{\bm\xi}), 
\label{X:eta22}
\end{align}
where we used Eqs.~\eqref{B:1} and \eqref{B:3}.

We substitute Eqs.~\eqref{X:eta1} and \eqref{X:eta22} into Eq.~\eqref{X:eta}, and we also apply the first law of thermodynamics~\eqref{1st} to the last term of Eq.~\eqref{X:eta}: 
\begin{equation}
n\left(1-v^{2}\right)^{-1}\left\{{\cal L}_{\bar{\bm\eta}}h+hv^{2}{\cal L}_{\bar{\bm\eta}}\Big(\ln|\bm u\cdot\bar{\bm\eta}|\Big)-h\bm B(\bar{\bm\xi},\bar{\bm\xi})\right\}-nT{\cal L}_{\bar{\bm\eta}}s=0.
\label{X:eta3}
\end{equation}
Here, from the isentropic condition~\eqref{ise} and the stationarity~\eqref{steady1} of the flow, we obtain 
\begin{equation}
{\cal L}_{\bar{\bm\eta}}s=0,
\end{equation}
i.e. the last term of Eq.~\eqref{X:eta3} vanishes.
We also deform the first term of Eq.~\eqref{X:eta3} using the continuity equation~\eqref{cont} as follows.
The continuity equation~\eqref{cont} is rewritten as ${\cal L}_{\bm u}n=-n\nabla\cdot\bm u$, 
so that we have
\begin{align}
{\cal L}_{\bar{\bm\eta}}h&=hn^{-1}v_{\rm s}^{2}{\cal L}_{\bar{\bm\eta}}n\notag\\
&=-hv_{\rm s}^{2}\nabla\cdot\bm u\notag\\
&=-hv_{\rm s}^{2}\left\{{\cal L}_{\bar{\bm\eta}}\Big(\ln|\bm u\cdot\bar{\bm\eta}|\Big)+\Theta\right\}.
\label{X:eta4}
\end{align}
Substituting Eq.~\eqref{X:eta4} into Eq.~\eqref{X:eta3} gives
\begin{equation}
-nh(1-v^{2})^{-1}\left\{(v_{\rm s}^{2}-v^{2}){\cal L}_{\bar{\bm\eta}}\Big(\ln|\bm u\cdot\bar{\bm\eta}|\Big)+\bm B(\bar{\bm\xi},\bar{\bm\xi})+v_{\rm s}^{2}~\Theta\right\}=0.
\end{equation}
Because we have supposed $nh(1-v^{2})^{-1}\neq0$, 
we arrive at the de Laval nozzle-like equation~\eqref{de laval}.

\subsection{The other spatial components}
\label{app:ei}

Let $\bm e_{i}$ be another orthonormal vector field on $U$ than $\bar{\bm\xi}$ and $\bar{\bm\eta}$, satisfying
\begin{equation}
\bar{\bm\xi}\cdot\bm e_{i}=\bar{\bm\eta}\cdot\bm e_{i}=0.
\label{ei:ortho}
\end{equation}
Substituting $\bm X=\bm e_{i}$ into Eq.~\eqref{X} yields 
\begin{equation}
n{\cal L}_{\bm u}[h(\bm u\cdot\bm e_{i})]-nh\bm u\cdot\nabla_{\bm u}\bm e_{i}+{\cal L}_{\bm e_{i}}P=0.
\label{X:ei}
\end{equation}
From the expression~\eqref{decomp} of $\bm u$ and the orthogonal relation~\eqref{ei:ortho}, we find $\bm e_{i}$ is also orthogonal to $\bm u$, i.e. $\bm u\cdot\bm e_{i}=0$.
Therefore, the first term of Eq.~\eqref{X:ei} vanishes.
We deform the second term of Eq.~\eqref{X:ei} as 
\begin{align}
-\bm u\cdot\nabla_{\bm u}\bm e_{i}&=\nabla_{\bm u}\bm u\cdot\bm e_{i}\notag\\
&=\left(1-v^{2}\right)^{-1}\Big\{\nabla_{\bar{\bm\xi}}\bar{\bm\xi}+v\Big(\nabla_{\bar{\bm\xi}}\bar{\bm\eta}+\nabla_{\bar{\bm\eta}}\bar{\bm\xi}\Big)+v^{2}\nabla_{\bar{\bm\eta}}\bar{\bm\eta}\Big\}\cdot\bm e_{i}.
\label{X:ei1}
\end{align}
Applying the Killing equation $\nabla_{\mu}\xi_{\nu}+\nabla_{\nu}\xi_{\mu}=0$ and the orthogonal relation~\eqref{ei:ortho}, we rewrite the first term and the third term of Eq.~\eqref{X:ei1} as follows:
\begin{align}
\nabla_{\bar{\bm\xi}}\bar{\bm\xi}\cdot\bm e_{i}&=\left(\bm\xi\cdot\bm\xi\right)^{-1}\nabla_{\bm e_{i}}\bm\xi\cdot\bm\xi\notag\\
&=\frac{1}{2}\left(\bm\xi\cdot\bm\xi\right)^{-1}\nabla_{\bm e_{i}}\left(\bm\xi\cdot\bm\xi\right)\notag\\
&={\cal L}_{\bm e_{i}}\left(\ln\sqrt{|\bm\xi\cdot\bm\xi|}\right),
\label{X:ei:second}
\end{align}
and
\begin{align}
\nabla_{\bar{\bm\eta}}\bar{\bm\xi}\cdot\bm e_{i}&=-\left(\bm\xi\cdot\bm\xi\right)^{-1/2}\nabla_{\bm e_{i}}\bm\xi\cdot\bar{\bm\eta}\notag\\
&=\nabla_{\bm e_{i}}\bar{\bm\eta}\cdot\bar{\bm\xi}\notag\\
&=\bm B(\bar{\bm\xi},\bm e_{i}).
\label{X:ei:third}
\end{align}
Eq.~\eqref{X:ei1} transforms to
\begin{equation}
-\bm u\cdot\nabla_{\bm u}\bm e_{i}=\left(1-v^{2}\right)^{-1}\Big\{{\cal L}_{\bm e_{i}}\left(\ln\sqrt{|\bm\xi\cdot\bm\xi|}\right)+2v\bm B(\bar{\bm\xi},\bm e_{i})+v^{2}\bm a\cdot\bm e_{i}\Big\}.
\end{equation}
Therefore, Eq.~\eqref{X:ei} results in
\begin{equation}
nh\left(1-v^{2}\right)^{-1}\Big\{{\cal L}_{\bm e_{i}}\left(\ln\sqrt{|\bm\xi\cdot\bm\xi|}\right)+2v\bm B(\bar{\bm\xi},\bm e_{i})+v^{2}\bm a\cdot\bm e_{i}\Big\}+{\cal L}_{\bm e_{i}}P=0.
\label{energy:ei0}
\end{equation}
Here, we denote the second term of Eq.~\eqref{energy:ei0} by the following vector field $\bm b$ on $U$ orthogonal to both of $\bar{\bm\xi}$ and $\bar{\bm\eta}$:
\begin{equation}
\bm b:=\sum_{i}\bm B(\bar{\bm\xi},\bm e_{i})\bm e_{i}.
\end{equation}
Because 
we have supposed $nh\left(1-v^{2}\right)^{-1}\neq0$, Eq.~\eqref{energy:ei0} is rewritten as 
\begin{equation}
\bm e_{i}\cdot\left\{\dd\left(\ln\sqrt{|\bm\xi\cdot\bm\xi|}\right)+2v\,\bm b+v^{2}\bm a+\left(1-v^{2}\right)(nh)^{-1}\dd P\right\}=0.
\label{energy:ei}
\end{equation}

\section{Phase space analysis}
\label{app:phase}

We derived the necessary conditions~\eqref{con1} and \eqref{con2} for the steady perfect fluid flow 
at the sonic point.
These conditions have been conventionally derived through the phase space analysis.
In this appendix, we revisit Theorem~\ref{prop:con} by considering the Hamiltonian mechanics on the streamlines.

\subsection{Conserved quantities along streamlines}

Let us consider the steady perfect fluid flow 
in $U$.
The Lie-commutativity~\eqref{frob} stipulates that $\bm\xi$ and $\bar{\bm\eta}$ span a foliation of $U$.
Since we can take $U$ to be an 
arbitrarily 
small region including a point $p$, 
we can 
always find a 
local chart $(U,(\tau,\lambda,X^{i}))$ by choosing a suitable $U$, where the coordinates $\tau$ and $\lambda$ on each leaf are respectively the parameters on the integral curves of $\bm\xi$ and $\bar{\bm\eta}$ satisfying
\begin{subequations}
\begin{eqnarray}
\bm\xi&=&\bm\partial_{\tau},\label{xi tau}\\
\bar{\bm\eta}&=&\bm\partial_{\lambda},\label{eta lambda}
\end{eqnarray}
\end{subequations}
and the other coordinates $X^{i}$ are $d-1$ independent 
functions taking constant on each leaf.
$\bm\xi$ and $\bar{\bm\eta}$ are holonomic bases on each leaf in this coordinate system.

Because the fluid $(d+1)$-velocity $\bm u$ has been expressed as the combination of $\bm\xi$ and $\bar{\bm\eta}$ from Eq.~\eqref{decomp}, $\bm u$ is tangent to the leaves of the foliation of $U$.
In other words, for any integral curve of $\bm u$ in $U$, there exists a leaf $\Sigma\subset U$ of the foliation 
in a manner such that the integral curve is given as the image $|\gamma|$ of 
$\gamma:(\sigma_{1},\sigma_{2})\to\Sigma$, where $\sigma\in(\sigma_{1},\sigma_{2})$ is the proper time along $|\gamma|$, and $\bm u$ is expressed as $\bm u=\dd/\dd\sigma$.
We get the parametric representation of each integral curve $|\gamma|$ of $\bm u$ as 
\begin{subequations}
\begin{eqnarray}
\tau&=&\tau(\sigma),\\
\lambda&=&\lambda(\sigma),\\
X^{i}&=&const.,
\end{eqnarray}
\end{subequations}
and the coordinate representation of $\bm u=\dd/\dd\sigma$ is given by
\begin{equation}
\bm u=u^{\tau}\bm\partial_{\tau}+u^{\lambda}\bm\partial_{\lambda},
\label{u:coord}
\end{equation}
where $u^{\tau}:={\rm d}\tau/{\rm d}\sigma$ and $u^{\lambda}:={\rm d}\lambda/{\rm d}\sigma$.
The coefficients $u^{\tau}$ and $u^{\lambda}$ are related to the other expressions as follows:
\begin{subequations}
\begin{eqnarray}
-\bm u\cdot\bar{\bm\xi}&=&\frac{1}{\sqrt{1-v^{2}}}\,=\,u^{\tau}\sqrt{|g_{\tau\tau}|},\\
\bm u\cdot\bar{\bm\eta}&=&\frac{v}{\sqrt{1-v^{2}}}\,=\,u^{\lambda}.
\label{ulam}
\end{eqnarray}
\end{subequations}
The normalization condition~\eqref{norm} for $\bm u$ is deformed to
\begin{equation}
g_{\tau\tau}(u^{\tau})^{2}+(u^{\lambda})^{2}=-1,
\label{norm0}
\end{equation}
where we used that $\bar{\bm\eta}=\bm\partial_{\lambda}$ is a unit vector field, i.e. $g_{\lambda\lambda}=1$.

In the coordinate system $(\tau,\lambda,X^{i})$, the continuity equation~(\ref{cont}) and 
the energy-momentum conservation law~(\ref{energy}) contracted with $\bm\xi$ are rewritten as
\begin{subequations}
\begin{eqnarray}
\partial_{\mu}\left(\sqrt{|g|}\,nu^{\mu}\right)&=&0,\label{acc0}\\
\partial_{\mu}\left(\sqrt{|g|}\,nhg_{\tau\tau}u^{\tau}u^{\mu}\right)&=&0,\label{nu0}
\end{eqnarray}
\end{subequations}
where $g$ denotes the determinant of $\bm g$ in the  coordinate system $(\tau,\lambda,X^{i})$.
We find the associated conserved quantities $\mu$ and $\nu$ taking constant on each leaf as follows: 
\begin{subequations}
\begin{eqnarray}
\mu=\mu(X^{i})&:=&\sqrt{|g|}\,nu^{\lambda},
\label{acc}\\
\nu=\nu(X^{i})&:=&\sqrt{|g|}\,nhg_{\tau\tau}u^{\tau}u^{\lambda}.
\label{nu}
\end{eqnarray}
\end{subequations}
In addition, the specific entropy $s$ is the third conserved quantity on each leaf from the isentropic condition~\eqref{ise}.
In this paper, we use $s$, $\mu$ and the ratio 
\begin{equation}
\varepsilon=\varepsilon(X^{i}):=\frac{\nu}{\mu}=g_{\tau\tau}hu^{\tau}, 
\label{varep}
\end{equation}
that is also a conserved quantity instead of $\nu$.

The conserved quantity $\mu$ on each leaf of the foliation is what is called the accretion rate in accretion problems.
For $\mu=0$, we get $u^{\lambda}=0$ from Eq.~\eqref{acc}, and we further get $v=0$ from Eq.~\eqref{ulam}.
Therefore, the flow is at rest in the reference frame of the observer if $\mu=0$, which is consistent with the picture of non-accreting fluid.
Note that $\varepsilon$ in Eq.~\eqref{varep} takes non-zero value even if $\mu=0$.

In the coordinate system $(\tau,\lambda,X^{i})$, the de Laval nozzle equation~\eqref{de laval} is rewritten as
\begin{equation}
\left(v_{\rm s}^{2}-v^{2}\right)\partial_{\lambda}\left(\ln |u^{\lambda}|\right)+\Phi=0,
\label{laval}
\end{equation}
where we can express the second term $\Phi$ as
\begin{equation}
\Phi=-\partial_{\lambda}\left(\ln\sqrt{|g_{\tau\tau}|}\right)+v_{\rm s}^{2}~\partial_{\lambda}\left(\ln\sqrt{|g|}\right)
\label{Phi}
\end{equation}
using
\begin{subequations}
\begin{eqnarray}
\Theta&=&\partial_{\lambda}\left(\ln\sqrt{|g|}\right),\label{exp}\\
B_{\tau\tau}&=&\partial_{\lambda}\left(\ln\sqrt{|g_{\tau\tau}|}\right)g_{\tau\tau}.\label{Btt}
\end{eqnarray}
\end{subequations}

\subsection{Hamiltonian mechanics on streamlines}
\label{II3}

Once the foliation of $U$ is given, and the values of the conserved
quantities $s$, $\mu$ and $\varepsilon$ on one of the leaves $\Sigma$
are fixed, then Eqs.~\eqref{norm}, \eqref{eos}, \eqref{acc} and
\eqref{nu} result in the simultaneous algebraic equations for
$u^{\tau}(\sigma)$, $u^{\lambda}(\sigma)$, $n(\sigma)$ and $h(\sigma)$,
where we may rewrite the equation of state~\eqref{eos} to $h=h(n,s)$
since we have supposed that the squared speed of
sound~\eqref{vs} is positive, i.e. $v_{\rm s}^{2}=h^{-1}(\partial
P/\partial n)_{s}>0$, and therefore $n=n(P,s)$ is a strictly
monotonically increasing function of $P$ if $s$ is fixed.
By eliminating $u^{\tau}$, $u^{\lambda}$ and $h$ from Eqs.~\eqref{norm}, \eqref{eos}, \eqref{acc} and \eqref{nu}, we obtain the following algebraic equation for $n=n(\sigma)$:
\begin{equation}
\varepsilon^{2}=h(n(\sigma),s)^{2}|g_{\tau\tau}(\sigma)|\left[1+\left(\frac{\mu}{\sqrt{|g(\sigma)|}~n(\sigma)}\right)^{2}\right].
\label{Hamiltonian}
\end{equation}
We define the product space $\Gamma:=(\sigma_{1},\sigma_{2})\times\mathbb{R}_{>0}$ of the proper time $\sigma$ and the number density $n$, and express the right-hand side of Eq.~\eqref{Hamiltonian} as
\begin{equation}
F(\sigma,n):=h(n,s)^{2}|g_{\tau\tau}(\sigma)|\left[1+\left(\frac{\mu}{\sqrt{|g(\sigma)|}~n}\right)^{2}\right].
\label{F}
\end{equation}
Eq.~\eqref{Hamiltonian} is simply rewritten as $\varepsilon^{2}=F(\sigma,n)$, and we obtain $n=n(\sigma)$ as a level curve of the 2-dimensional surface on $\Gamma$ that is given by $F:\Gamma\to\mathbb{R}_{>0}$ from Eq.~\eqref{F}.
The level curve 
can be also given as the image $|c|$ of the map $c:(\sigma_{1},\sigma_{2})\to\Gamma$.

It is well known that the level curve of the 2-dimensional surface can
be regarded as the orbit of the Hamiltonian mechanics in 2-dimensional
phase space in general, and the investigation of fluid systems in terms
of the Hamiltonian mechanics, referred to as the phase space analysis,
is widely used (see, e.g. Ref.~\cite{Chaverra:2015bya}). 
We can actually derive the canonical equations for $(\sigma(\tilde\lambda),n(\tilde{\lambda}))$ as follows, where $\tilde{\lambda}$ is the parameter on the level curve $|c|$.
Because $|c|$ is a level curve of $F$, the directional derivative of $F$ along $|c|$ vanishes.  
We denote the directional derivative along $|c|$ by ${\rm d}_{|c|}$, and we get
\begin{equation}
{\rm d}_{|c|} F=(\partial_{\sigma}F){\rm d}_{|c|}\sigma+(\partial_{n}F){\rm d}_{|c|} n=0.
\label{c}
\end{equation}
In the parametric representation $(\sigma(\tilde\lambda),n(\tilde\lambda))$, Eq.~\eqref{c} is deformed to the canonical equations:
\begin{equation}
\frac{{\rm d}}{\rm d \tilde\lambda}
\begin{pmatrix}\sigma\\n\end{pmatrix}=N\begin{pmatrix}\partial_{n}F\\-\partial_{\sigma}F\end{pmatrix},
\label{canonical}
\end{equation}
where $N$ is the Lagrange multiplier associated with the parametrization $\tilde\lambda$ of $|c|$.
In the context of the phase space analysis, $\Gamma$ is the phase space, $F$ is the Hamiltonian, $|c|$ is the orbit, and $n$ is  the conjugate momentum to the position $\sigma$.

We also redefine the speed of sound and the speed of the flow as functions $v_{\rm s}=v_{\rm s}(n)$ and $v=v(\sigma,n)$ on $\Gamma$ through Eqs.~\eqref{vs} and \eqref{v}.
From Eq.~\eqref{v}, $v=v(\sigma,n)$ is given by
\begin{equation}
\left(1-v^{2}\right)^{-1}=1+\left(\frac{\mu}{\sqrt{|g|}\,n}\right)^{2}.
\label{v:phase}
\end{equation}
From Eq.~\eqref{v:phase},
\begin{subequations}
\begin{eqnarray}
\partial_{n}v^{2}&=&-2v^{2}(1-v^{2})n^{-1},\label{v2_n}\\
\partial_{\sigma}v^{2}&=&-2v^{2}(1-v^{2})\partial_{\sigma}\left(\ln\sqrt{|g|}\right).\label{v2_sigma}
\end{eqnarray}
\end{subequations}
Substituting Eq.~\eqref{v:phase} into Eq.~\eqref{F}, we rewrite $F$ to
\begin{equation}
F=h^{2}|g_{\tau\tau}|\left(1-v^{2}\right)^{-1}.
\end{equation}
Applying Eqs.~\eqref{v2_n} and \eqref{v2_sigma}, we obtain
\begin{subequations}
\begin{eqnarray}
\partial_{n}F&=&2Fn^{-1}\left(v_{\rm s}^{2}-v^{2}\right),\label{F:n}\\
\partial_{\sigma}F&=&2F\left\{-\Phi_{\Gamma}+\left(v_{\rm s}^{2}-v^{2}\right)\partial_{\sigma}\left(\ln\sqrt{|g|}\right)\right\},
\label{F:s}
\end{eqnarray}
\end{subequations}
where
\begin{equation}
\Phi_{\Gamma}(\sigma,n):=-\partial_{\sigma}\left(\ln\sqrt{|g_{\tau\tau}|}\right)+v_{\rm s}^{2}~\partial_{\sigma}\left(\ln\sqrt{|g|}\right).
\label{Phic}
\end{equation}
We find $\Phi$ and $\Phi_{\Gamma}$ obey $\Phi=u^{\lambda}\Phi_{\Gamma}$ from Eqs.~\eqref{Phi} and \eqref{Phic}.

\subsection{Sonic point in the spacetime and saddle point in the phase space}

Now we present the following theorem which is consistent with Theorem~\ref{prop:con}. 
\begin{thm}
Let $\gamma:(\sigma_{1},\sigma_{2})\to U$ be the fluid world line, the integral curve of $\bm u$, including $p$.
Consider the Hamiltonian mechanics on the streamline $|\gamma|$, where the phase space is given as $\Gamma=(\sigma_{1},\sigma_{2})\times\mathbb{R}_{>0}$, the space of $\sigma$ and $n$, and the Hamiltonian $F:\Gamma\to\mathbb{R}_{>0}$ is given by Eq.~\eqref{F}.
Let $c:(\sigma_{1},\sigma_{2})\to\Gamma$ give $n=n(\sigma)$.
If $p\in|\gamma|$ is a sonic point, $p_{c}\in|c|$ defined by $p_{c}=c\circ\gamma^{-1}(p)$ must be a saddle point of $F$, i.e.
\begin{subequations}
\begin{eqnarray}
\partial_{n}F|_{p_{c}}&=&0,\label{cr1}\\
\partial_{\sigma}F|_{p_{c}}&=&0,\label{cr2}
\end{eqnarray} 
\end{subequations}
and the Hessian at $p_{c}$ defined by
\begin{equation}
{\rm Hess}:=\Big\{(\partial_{\sigma}^{2}F)(\partial_{n}^{2}F)-(\partial_{\sigma}\partial_{n}F)^{2}\Big\}\Big|_{p_{c}}
\label{hess0}
\end{equation}
obeys ${\rm Hess}\le0$.
\label{prop:8}
\end{thm}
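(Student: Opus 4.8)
The plan is to read the two critical-point equations straight off the formulas for the first derivatives of $F$ established in Eqs.~\eqref{F:n} and~\eqref{F:s}, and then to obtain the Hessian inequality by differentiating those formulas once more and invoking Theorem~\ref{prop:con}.

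First, by~\eqref{F:n} one has $\partial_{n}F=2Fn^{-1}(v_{\rm s}^{2}-v^{2})$; since $F>0$ and $n>0$, the hypothesis $(v_{\rm s}^{2}-v^{2})|_{p}=0$ that $p$ be a sonic point gives~\eqref{cr1} at once. By~\eqref{F:s}, $\partial_{\sigma}F=2F\{-\Phi_{\Gamma}+(v_{\rm s}^{2}-v^{2})\partial_{\sigma}\ln\sqrt{|g|}\}$; at $p_{c}$ the second term in the braces vanishes by the sonic condition, while the first vanishes because $\Phi|_{p}=0$ by~\eqref{con1} and $\Phi=u^{\lambda}\Phi_{\Gamma}$ with $u^{\lambda}=v/\sqrt{1-v^{2}}\neq0$ at a sonic point, where $v|_{p}=v_{\rm s}>0$. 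Hence $\partial_{\sigma}F|_{p_{c}}=0$, which is~\eqref{cr2}.

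For the Hessian, differentiate~\eqref{F:n} and~\eqref{F:s} once more; the $C^{2}$ assumption on the flow makes this legitimate. Because $v_{\rm s}^{2}-v^{2}$ and $\Phi_{\Gamma}$ both vanish at $p_{c}$, almost every term drops and one is left, after substituting $\partial_{n}v^{2}$ and $\partial_{\sigma}v^{2}$ from~\eqref{v2_n} and~\eqref{v2_sigma} and using $\partial_{\sigma}v_{\rm s}^{2}=0$, with
\begin{align*}
\partial_{n}^{2}F|_{p_{c}}&=2Fn^{-1}\,\partial_{n}(v_{\rm s}^{2}-v^{2}),\\
\partial_{\sigma}\partial_{n}F|_{p_{c}}&=2Fn^{-1}\,\partial_{\sigma}(v_{\rm s}^{2}-v^{2}),\\
\partial_{\sigma}^{2}F|_{p_{c}}&=2F\{-\partial_{\sigma}\Phi_{\Gamma}+\partial_{\sigma}(v_{\rm s}^{2}-v^{2})\,\partial_{\sigma}\ln\sqrt{|g|}\}.
\end{align*}
Inserting these into~\eqref{hess0}, the terms quadratic in $v^{2}(1-v^{2})$ cancel and I expect
\begin{equation*}
{\rm Hess}|_{p_{c}}=4F^{2}n^{-1}\Big[-(\partial_{\sigma}\Phi_{\Gamma})\,\partial_{n}(v_{\rm s}^{2}-v^{2})+2v^{2}(1-v^{2})(\partial_{\sigma}\ln\sqrt{|g|})^{2}\,\partial_{n}v_{\rm s}^{2}\Big].
\end{equation*}
Since the prefactor $4F^{2}n^{-1}$ is positive, ${\rm Hess}\le0$ is equivalent to the bracket being non-positive.

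The hard part is this last step. The plan is to rewrite the phase-space $\sigma$-derivatives as Lie derivatives along the streamline: the chain rule along $|\gamma|$ together with the stationarity of the metric quantities converts $\partial_{\sigma}$ acting on them into $u^{\lambda}{\cal L}_{\bar{\bm\eta}}$, while the continuity equation expresses $\partial_{n}v_{\rm s}^{2}$ through ${\cal L}_{\bar{\bm\eta}}v_{\rm s}^{2}$ and ${\cal L}_{\bar{\bm\eta}}n$. After these substitutions the requirement that the bracket be non-positive should translate into precisely the inequality~\eqref{con2} of Theorem~\ref{prop:con} (the two being the same condition written in the spacetime and the phase-space languages, respectively), so it holds and hence ${\rm Hess}|_{p_{c}}\le0$. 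Conceptually this is just the statement that a $C^{2}$ transonic solution can thread a critical point of the Hamiltonian $F$ only if that point is a saddle (or degenerate), not a strict extremum; the computation above is the quantitative form of this, and it recovers Theorem~\ref{prop:con} from the Hamiltonian picture.
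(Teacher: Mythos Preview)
Your derivation of the two critical-point equations is sound, though it differs from the paper's. You obtain~\eqref{cr2} by importing $\Phi|_{p}=0$ from Theorem~\ref{prop:con} and dividing by $u^{\lambda}\neq0$; the paper instead argues intrinsically in the phase space: since $\partial_{n}F|_{p_{c}}=0$, the relation ${\rm d}n/{\rm d}\sigma=-\partial_{\sigma}F/\partial_{n}F$ would blow up unless $\partial_{\sigma}F|_{p_{c}}=0$, and the assumed $C^{2}$ regularity of the flow forbids this. The paper's route keeps Theorem~\ref{prop:8} logically independent of Theorem~\ref{prop:con}, which is the point of the appendix (``we revisit Theorem~\ref{prop:con} by considering the Hamiltonian mechanics''); your route is valid but reverses that dependence.

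The Hessian argument, however, has a genuine gap. Your expressions for $\partial_{n}^{2}F$, $\partial_{\sigma}\partial_{n}F$, $\partial_{\sigma}^{2}F$ at $p_{c}$ are correct and agree with the paper's after the substitutions~\eqref{v2_n}--\eqref{v2_sigma}. But the final step---``the requirement that the bracket be non-positive should translate into precisely the inequality~\eqref{con2}''---is asserted, not proved. Converting between the phase-space partials $(\partial_{\sigma},\partial_{n})$ and the spacetime derivative ${\cal L}_{\bar{\bm\eta}}$ is not as direct as you suggest: $\partial_{\sigma}\Phi_{\Gamma}$ is taken at fixed $n$, whereas ${\cal L}_{\bar{\bm\eta}}\Phi$ involves the actual $n(\lambda)$, and relating the two already requires the continuity equation and the de~Laval relation. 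More importantly, \eqref{con2} is only an \emph{inequality}; to deduce bracket~$\le0$ from it you would need to exhibit bracket $=-(\text{positive})\times\big({\cal L}_{\bar{\bm\eta}}\Phi+\tfrac{1}{2}v^{-2}(1-v^{2})^{-1}{\cal L}_{\bar{\bm\eta}}(v_{\rm s}^{2}-v^{2}){\cal L}_{\bar{\bm\eta}}v_{\rm s}^{2}\big)$, and establishing that identity is essentially the same algebra the paper does directly. The paper's approach avoids this detour entirely: it rewrites the partial derivatives in terms of the total derivative ${\rm d}/{\rm d}\sigma$ along $|c|$ via~\eqref{Ac}--\eqref{vv2}, then uses the $\sigma$-derivative of the de~Laval equation~\eqref{laval:diff} to kill one term, arriving at the closed form
\[
{\rm Hess}=-4F^{2}n^{-2}\Big[\tfrac{{\rm d}}{{\rm d}\sigma}(v_{\rm s}^{2}-v^{2})\Big]^{2}\Big|_{p_{c}}\le0,
\]
which is manifestly a non-positive perfect square. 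No appeal to~\eqref{con2} is needed; the inequality is automatic once the algebra is completed. You should either carry out your proposed translation in full (which will amount to the same computation) or adopt the paper's direct reduction.
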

\begin{proof}
The point $p\in|\gamma|$ in the spacetime is mapped to the point $p_{c}\in|c|$ in the phase space by the composition $c\circ \gamma^{-1}$.
If $p$ is a sonic point, simultaneously the following equation in $\Gamma$ must be satisfied:
\begin{equation}
\left(v_{\rm s}^{2}-v^{2}\right)\Big|_{p_{c}}=0.
\label{vs:phase}
\end{equation}
From Eq.~\eqref{F:n}, Eq.~\eqref{vs:phase} is equivalent to Eq.~\eqref{cr1}.
In addition, from the canonical equations~\eqref{canonical}, Eq.~\eqref{cr1} is also equivalent to
\begin{equation}
\frac{{\rm d}\sigma}{{\rm d}\tilde\lambda}\Big|_{p_{c}}=0.
\end{equation}
where we set $N$ to $N=1$.
In the context of dynamical systems, the set of points where ${\rm d}\sigma/{\rm d}\tilde\lambda=0$ with $N=1$ is called the $\sigma$-nullcline.
Therefore, we can conclude that $p_{c}$ must be a point on the $\sigma$-nullcline $\{\partial_{n}F=0\}$ if $p$ is the sonic point.

We find from the canonical equation~\eqref{canonical} with $N=1/(\partial_{n}F)$ which leads to $\tilde\lambda=\sigma$ that the gradient of $n=n(\sigma)$ is given by
\begin{equation}
\frac{{\rm d} n}{\rm d \sigma}=-\frac{\partial_{\sigma}F}{\partial_{n}F}.
\label{dnds}
\end{equation}
We have supposed that the flow is of class $C^{2}$ in $U$, and therefore Eq.~\eqref{cr2} must also be satisfied in order that Eq.~\eqref{dnds} does not diverge at $p_{c}$.
So far we showed that both Eqs.~\eqref{cr1} and \eqref{cr2} must be satisfied at $p_{c}$, i.e. $p_{c}$ must be a critical point of $F$.
Substituting Eq.~\eqref{F:s} into Eq.~\eqref{cr2}, we obtain 
\begin{equation}
\Phi_{\Gamma}|_{p_{c}}=0.
\label{PhiGamma0}
\end{equation}
Eq.~\eqref{PhiGamma0} is equivalent to Eq.~\eqref{con1} in Theorem~\ref{prop:con}.

We then evaluate the Hessian at $p_{c}$ defined in Eq.~\eqref{con2}.
The second derivatives of $F=F(\sigma,n)$ at the critical point $p_{c}$ is given as follows:
\begin{subequations}
\begin{eqnarray}
\partial_{n}^{2}F|_{p_{c}}&=&2Fn^{-1}\partial_{n}\left(v_{\rm s}^{2}-v^{2}\right)\Big|_{p_{c}},
\label{nn2}\\
\partial_{\sigma}\partial_{n}F|_{p_{c}}&=&-2F(\partial_{n}v^{2})\partial_{\sigma}\left(\ln\sqrt{|g|}\right)\Big|_{p_{c}},\\
\partial_{\sigma}^{2}F|_{p_{c}}&=&2F\left\{-\partial_{\sigma}\Phi_{\Gamma}-n\left(\partial_{n}v^{2}\right)\left[\partial_{\sigma}\left(\ln\sqrt{|g|}\right)\right]^{2}\right\}\Big|_{p_{c}}.
\end{eqnarray}
\end{subequations}
Eq.~\eqref{hess0} is calculated to be
\begin{align}
{\rm Hess}&=2F\left\{-(\partial_{n}^{2}F)(\partial_{\sigma}\Phi_{\Gamma})-2F\left(\partial_{n}v_{\rm s}^{2}\right)\left(\partial_{n}v^{2}\right)\left[\partial_{\sigma}\left(\ln\sqrt{|g|}\right)\right]^{2}\right\}\Big|_{p_{c}}\notag\\
&=2F\left\{-(\partial_{n}^{2}F)(\partial_{\sigma}\Phi_{\Gamma})-2Fn^{-1}\left(\partial_{n}v_{\rm s}^{2}\right)\left(\partial_{\sigma}v^{2}\right)~\partial_{\sigma}\left(\ln\sqrt{|g|}\right)\right\}\Big|_{p_{c}}.
\label{hess}
\end{align}
We rewrite all terms of Eq.~\eqref{hess} with the directional derivative ${\rm d}/{\rm d}\sigma$ along the level curve $|c|$ by using
\begin{subequations}
\begin{eqnarray}
\partial_{\sigma}\Phi_{\Gamma}&=&\frac{{\rm d}\Phi_{\Gamma}}{{\rm d}\sigma}-\frac{{\rm d}v_{\rm s}^{2}}{{\rm d}\sigma}~\partial_{\sigma}\left(\ln\sqrt{|g|}\right),
\label{Ac}\\
\left(\partial_{n}v_{\rm s}^{2}\right)\left(\partial_{\sigma}v^{2}\right)
&=&-\left(\partial_{n}v_{\rm s}^{2}\right)\frac{{\rm d}}{{\rm d}\sigma}\left(v_{\rm s}^{2}-v^{2}\right)+\frac{{\rm d}v_{\rm s}^{2}}{{\rm d}\sigma}~\partial_{n}\left(v_{\rm s}^{2}-v^{2}\right).
\label{vv}
\end{eqnarray}
\end{subequations}
The Hessian~\eqref{hess} results in
\begin{align}
{\rm Hess}&=2F\left\{-(\partial_{n}^{2}F)\frac{{\rm d}\Phi_{\Gamma}}{{\rm d}\sigma}+2Fn^{-1}\left(\partial_{n}v_{\rm s}^{2}\right)\frac{{\rm d}}{{\rm d}\sigma}\left(v_{\rm s}^{2}-v^{2}\right)\partial_{\sigma}\left(\ln\sqrt{|g|}\right)\right\}\Big|_{p_{c}}\notag\\
&=2F\left\{-(\partial_{n}^{2}F)\frac{{\rm d}\Phi_{\Gamma}}{{\rm d}\sigma}-Fn^{-1}v^{-2}\left(1-v^{2}\right)^{-1}\left(\partial_{n}v_{\rm s}^{2}\right)\left(\partial_{\sigma}v^{2}\right)\frac{{\rm d}}{{\rm d}\sigma}\left(v_{\rm s}^{2}-v^{2}\right)\right\}\Big|_{p_{c}}.
\label{hess2}
\end{align}
Eq.~\eqref{vv} is also rewritten as
\begin{equation}
\left(\partial_{n}v_{\rm s}^{2}\right)\left(\partial_{\sigma}v^{2}\right)
=-\left(\partial_{n}v^{2}\right)\frac{{\rm d}}{{\rm d}\sigma}\left(v_{\rm s}^{2}-v^{2}\right)+\frac{{\rm d}v^{2}}{{\rm d}\sigma}~\partial_{n}\left(v_{\rm s}^{2}-v^{2}\right).
\label{vv2}
\end{equation}
Substituting Eq.~\eqref{vv2} into the Hessian~\eqref{hess2} leads to
\begin{eqnarray}
{\rm Hess}&=&2F\Biggl\{-(\partial_{n}^{2}F)\left[\frac{{\rm d}\Phi_{\Gamma}}{{\rm d}\sigma}+\frac{1}{2}v^{-2}\left(1-v^{2}\right)^{-1}\frac{{\rm d}}{{\rm d}\sigma}\left(v_{\rm s}^{2}-v^{2}\right)\frac{{\rm d}v^{2}}{{\rm d}\sigma}\right]\cr
&&+Fn^{-1}v^{-2}\left(1-v^{2}\right)^{-1}\left(\partial_{n}v^{2}\right)\left[\frac{{\rm d}}{{\rm d}\sigma}\left(v_{\rm s}^{2}-v^{2}\right)\right]^{2}\Biggr\}\Bigg|_{p_{c}}.
\label{hess3}
\end{eqnarray}
Here, the de Laval nozzle-like equation~\eqref{laval} is expressed in $\Gamma$ as
\begin{equation}
\frac{1}{2}v^{-2}\left(1-v^{2}\right)^{-1}\left(v_{\rm s}^{2}-v^{2}\right)\frac{{\rm d}v^{2}}{{\rm d}\sigma}+\Phi_{\Gamma}=0,
\end{equation}
whence we obtain
\begin{equation}
\left\{\frac{1}{2}v^{-2}\left(1-v^{2}\right)^{-1}\frac{{\rm d}}{{\rm d}\sigma}\left(v_{\rm s}^{2}-v^{2}\right)\frac{{\rm d}v^{2}}{{\rm d}\sigma}+\frac{{\rm d}\Phi_{\Gamma}}{{\rm d}\sigma}\right\}\Big|_{p_{c}}=0.
\label{laval:diff}
\end{equation}
Therefore, the first term in the braces of Eq.~\eqref{hess3} vanishes from Eq.~\eqref{laval:diff}, and we obtain 
\begin{equation}
{\rm Hess}=-4F^{2}n^{-2}\left[\frac{{\rm d}}{{\rm d}\sigma}\left(v_{\rm s}^{2}-v^{2}\right)\right]^{2}\Bigg|_{p_{c}}\le0.
\end{equation}
\end{proof}

\section{Inequality at a sonic point}

\subsection{Inequality at a sonic point for radiation fluid}
\label{app:ineq}

In this subsection, we deform the inequality~\eqref{con2:rad} to the inequality~\eqref{ineq} with Eqs.~\eqref{A1}, \eqref{A2} and \eqref{A3}.
The inequality~\eqref{con2:rad} consists of the following three parts:
\begin{equation}
{\cal L}_{\bar{\bm\eta}}\left[\bm\sigma(\bar{\bm\xi},\bar{\bm\xi})\right]\Big|_{p}=C_{1}+C_{2}+C_{3},
\end{equation}
where
\begin{subequations}
\begin{eqnarray}
C_{1}&=&\frac{1}{d}\,{\cal L}_{\bar{\bm\eta}}\Theta\,\Big|_{p},\\
C_{2}&=&(\nabla_{\bar{\bm\eta}}\bm B)(\bar{\bm\xi},\bar{\bm\xi})\,\Big|_{p},\\
C_{3}&=&\Big\{\bm B(\nabla_{\bar{\bm\eta}}\bar{\bm\xi},\bar{\bm\xi})+\bm B(\bar{\bm\xi},\nabla_{\bar{\bm\eta}}\bar{\bm\xi})\Big\}\Big|_{p}.
\end{eqnarray}
\end{subequations}
From Eqs.~\eqref{B:perp1} and \eqref{B:perp2}, the tensor field $\bm B$ has been given as
\begin{equation}
\bm B=\frac{\Theta}{d}\,\bm g^{\perp}+\bm\sigma+\bm\omega+\bar{\bm\eta}^{\flat}\otimes\bm a^{\flat},
\end{equation}
and $\bm B$ obeys
\begin{align}
\nabla_{\bar{\bm\eta}}\bm B&=\bar{\eta}^{\mu}\nabla_{\mu}\nabla_{\nu}\bar{\eta}_{\rho}~\dd x^{\nu}\otimes\dd x^{\rho}\notag\\
&=\Big\{\bar{\eta}^{\mu}[\nabla_{\mu},\nabla_{\nu}]\bar{\eta}_{\rho}+\nabla_{\nu}\left(\bar{\eta}^{\mu}\nabla_{\mu}\bar{\eta}_{\rho}\right)-\left(\nabla_{\nu}\bar{\eta}^{\mu}\right)\left(\nabla_{\mu}\bar{\eta}_{\rho}\right)\Big\}\dd x^{\nu}\otimes\dd x^{\rho}\notag\\
&=-R(\,\cdot\,,\bar{\bm\eta},\,\cdot\,,\bar{\bm\eta})+\nabla\otimes\bm a^{\flat}-\bm B\cdot\bm B.
\label{B:cov}
\end{align}
Contracting indices of Eq.~\eqref{B:cov} with $\bm g$ gives
\begin{align}
{\cal L}_{\bar{\bm\eta}}\Theta&={\rm tr}_{\bm g}(\nabla_{\bar{\bm\eta}}\bm B)\notag\\
&=-{\rm Ric}(\bar{\bm\eta},\bar{\bm\eta})+\nabla\cdot\bm a-\bm B:\bm B,
\label{LTheta}
\end{align}
where the third term of Eq.~\eqref{LTheta} is given as
\begin{equation}
\bm B:\bm B=\frac{\Theta^{2}}{d}+\bm\sigma:\bm\sigma+\bm\omega:\bm\omega.
\end{equation}
We obtain
\begin{equation}
C_{1}=\Big\{-\left(\frac{\Theta}{d}\right)^{2}+\frac{1}{d}\Big(-{\rm Ric}(\bar{\bm\eta},\bar{\bm\eta})+\nabla\cdot\bm a-\bm\sigma:\bm\sigma-\bm\omega:\bm\omega\Big)\Big\}\Big|_{p}.
\label{C1}
\end{equation}
Contracting indices of Eq.~\eqref{B:cov} with $\bar{\bm\xi}$ gives
\begin{equation}
(\nabla_{\bar{\bm\eta}}\bm B)(\bar{\bm\xi},\bar{\bm\xi})=-R(\bar{\bm\xi},\bar{\bm\eta},\bar{\bm\xi},\bar{\bm\eta})+(\nabla\otimes\bm a^{\flat})(\bar{\bm\xi},\bar{\bm\xi})-(\bm B\cdot\bm B)(\bar{\bm\xi},\bar{\bm\xi}),
\label{LBtt}
\end{equation}
where the third term of Eq.~\eqref{LBtt} is calculated as
\begin{equation}
(\bm B\cdot\bm B)(\bar{\bm\xi},\bar{\bm\xi})=-\left(\frac{\Theta}{d}\right)^{2}+\frac{2\Theta}{d}\,\bm\sigma(\bar{\bm\xi},\bar{\bm\xi})+(\bm\sigma\cdot\bm\sigma+\bm\omega\cdot\bm\omega)(\bar{\bm\xi},\bar{\bm\xi}).
\end{equation}
Using the first condition~\eqref{con1:rad}, we obtain
\begin{equation}
C_{2}=\left\{\left(\frac{\Theta}{d}\right)^{2}-R(\bar{\bm\xi},\bar{\bm\eta},\bar{\bm\xi},\bar{\bm\eta})+(\nabla\otimes\bm a^{\flat})(\bar{\bm\xi},\bar{\bm\xi})
-(\bm\sigma\cdot\bm\sigma+\bm\omega\cdot\bm\omega)(\bar{\bm\xi},\bar{\bm\xi})\right\}\,\Bigg|_{p}.
\label{C2}
\end{equation}
$C_{3}$ is deformed to
\begin{align}
C_{3}&=\Big\{2\sqrt{|\bm\xi\cdot\bm\xi|}~\nabla_{\bar{\bm\eta}}\left(\frac{1}{\sqrt{|\bm\xi\cdot\bm\xi|}}\right)\bm B(\bar{\bm\xi},\bar{\bm\xi})
+\frac{1}{\sqrt{|\bm\xi\cdot\bm\xi|}}\left(\bm B(\nabla_{\bar{\bm\eta}}\bm\xi,\bar{\bm\xi})+\bm B(\bar{\bm\xi},\nabla_{\bar{\bm\eta}}\bm\xi)\right)\Big\}\,\Big|_{p}\notag\\
&=\Big\{-2\nabla_{\bar{\bm\eta}}\left(\ln\sqrt{|\bm\xi\cdot\bm\xi|}\right)\bm B(\bar{\bm\xi},\bar{\bm\xi})+\bm B(\nabla_{\bar{\bm\xi}}\bar{\bm\eta},\bar{\bm\xi})+\bm B(\bar{\bm\xi},\nabla_{\bar{\bm\xi}}\bar{\bm\eta})\Big\}\,\Big|_{p}\notag\\
&=\Big\{2\Big[\bm B(\bar{\bm\xi},\bar{\bm\xi})\Big]^{2}+2\left(\bm B\cdot\bm B^{({\rm S})}\right)(\bar{\bm\xi},\bar{\bm\xi})\Big\}\,\Big|_{p}, 
\end{align}
where we used Eqs.~\eqref{B:2} and \eqref{con1}.
Each term is further calculated as
\begin{subequations}
\begin{eqnarray}
\Big[\bm B(\bar{\bm\xi},\bar{\bm\xi})\Big]^{2}&=&\left(-\frac{\Theta}{d}+\bm\sigma(\bar{\bm\xi},\bar{\bm\xi})\right)^{2},\\
(\bm B\cdot\bm B^{({\rm S})})(\bar{\bm\xi},\bar{\bm\xi})&=&-\left(\frac{\Theta}{d}\right)^{2}+\frac{2\Theta}{d}\,\bm\sigma(\bar{\bm\xi},\bar{\bm\xi})+\left(\bm\sigma\cdot\bm\sigma-\bm\sigma\cdot\bm\omega\right)(\bar{\bm\xi},\bar{\bm\xi}).
\end{eqnarray}
\end{subequations}
Therefore,
\begin{equation}
C_{3}=2\left(\bm\sigma\cdot\bm\sigma-\bm\sigma\cdot\bm\omega\right)(\bar{\bm\xi},\bar{\bm\xi})\,\Big|_{p}.
\label{C3}
\end{equation}
Combining Eqs.~\eqref{C1}, \eqref{C2} and (\ref{C3}), we find
\begin{equation}
C_{1}+C_{2}+C_{3}=A_{1}+A_{2}+A_{3}.
\end{equation}

\subsection{Inequality in terms of the proper section }
\label{app:ineq2}

In this subsection, we complete the proof of Proposition~\ref{prop:7}: we derive the expression~\eqref{A2:prop} of $A_{2}$ starting from the expression~\eqref{A2}.
Using Eqs.~\eqref{a xi} and \eqref{X:ei:second}, we deform the first term of Eq.~\eqref{A2} to
\begin{align}
(\nabla\otimes\bm a^{\flat})(\bar{\bm\xi},\bar{\bm\xi})&=\nabla_{\bar{\bm\xi}}\left(\bm a\cdot\bar{\bm\xi}\,\right)-\bm a\cdot\nabla_{\bar{\bm\xi}}\bar{\bm\xi}\notag\\
&=-{\cal L}_{\bm a}\left(\ln\sqrt{|\bm\xi\cdot\bm\xi|}\right).
\label{A2:2:1}
\end{align}
Then let us deform the second term of Eq.~\eqref{A2}.
Here, for any vector $\bm X\in T_{p}S$ and any vector field $\bm Y$ of $TU^{\perp}$, orthogonal to $\bar{\bm\eta}$, we can get the following equation in the absence of the vorticity $\bm\omega$ at $p$:
\begin{align}
{\cal L}_{\bm X}\left(\bm Y\cdot\bar{\bm m}\right)|_{p}&=\Big\{\nabla_{\bm X}\bm Y\cdot(\bar{\bm m}-\bar{\bm\eta})+\bm Y\cdot\nabla_{\bm X}(\bar{\bm m}-\bar{\bm\eta})+{\cal L}_{\bm X}\left(\bm Y\cdot\bar{\bm\eta}\right)\Big\}\Big|_{p}\notag\\
&=-\bm\omega(\bm X,\bm Y)|_{p}\notag\\
&=0, 
\label{XY}
\end{align}
where we have used that $S$ is the proper section in the second equality.
In other words, $\bm Y$ remains tangent to $S$ to the first order of the Taylor expansion 
around $p$ if $\bm\omega|_{p}=0$, and therefore let us define the covariant derivative of $\bm Y$ on $S$ at $p$ although $\bm Y$ is not globally a vector field of $TS$.
Let $\bm D$ be the covariant derivative on $(S,\bm h)$, and simply define the covariant derivative of a vector field $\bm Y$ of $TU^{\perp}$ at $p$ by
\begin{equation}
\bm D_{\bm X}\bm Y|_{p}:=\Big\{\nabla_{\bm X}\bm Y+\bm\chi(\bm X,\bm Y)\bar{\bm m}\Big\}\Big|_{p}.
\label{D}
\end{equation}
We use the following notation for the divergence in this subsection:
\begin{align}
\bm D\cdot\bm Y|_{p}
&=\Big\{\nabla\cdot\bm Y-\bar{\bm m}\cdot\nabla_{\bar{\bm m}}\bm Y\Big\}\Big|_{p}\notag\\
&=\Big\{\nabla\cdot\bm Y-\bar{\bm \eta}\cdot\nabla_{\bar{\bm \eta}}\bm Y\Big\}\Big|_{p}\notag\\
&=\Big\{\nabla\cdot\bm Y+\bm Y\cdot\bm a\Big\}\Big|_{p}.
\label{DY}
\end{align}
Substituting $\bm Y=\bm a$ into Eq.~\eqref{DY}, we can rewrite the second term of the expression~\eqref{A2} of $A_{2}$ to
\begin{equation}
\nabla\cdot\bm a\,|_{p}=\left(\bm D\cdot\bm a-\bm a\cdot\bm a\right)|_{p}.
\label{A2:2:2}
\end{equation}
From Eqs.~\eqref{A2:2:1} and \eqref{A2:2:2}, $A_{2}$ results in
\begin{equation}
A_{2}=\left\{-{\cal L}_{\bm a}\left(\ln\sqrt{|\bm\xi\cdot\bm\xi|}\right)+\frac{1}{d}\left(\bm D\cdot\bm a-\bm a\cdot\bm a\right)\right\}\Big|_{p}.
\label{A2:2}
\end{equation}

We further deform $A_{2}$ using the energy-momentum conservation law.
We have performed the deformation of Eq.~\eqref{app:ei} contracted with $\bm e_{i}$ as \eqref{energy:ei} in Appendix~\ref{app:ei}. 
Similarly to Eq.~\eqref{energy:ei}, 
we consider Eq.~\eqref{app:ei} contracted with $\bm a$: 
\begin{equation}
\left\{{\cal L}_{\bm a}\left(\ln\sqrt{|\bm\xi\cdot\bm\xi|}\right)+2v\,\bm a\cdot\bm b+v^{2}\bm a\cdot\bm a+\left(1-v^{2}\right)(nh)^{-1}{\cal L}_{\bm a}P\right\}\Big|_{p}=0.
\label{tang}
\end{equation}
Now $p$ is the sonic point for the radiation fluid, and the speed of the flow at $p$ is given by $1/\sqrt{d}$.
Therefore, we can rewrite the first term and the third term by Eq.~\eqref{tang}.
$A_{2}$ results in
\begin{equation}
A_{2}=\left\{\frac{1}{d}\,\bm D\cdot\bm a+\frac{2}{\sqrt{d}}~\bm a\cdot\bm b+\left(1-\frac{1}{d}\right)(nh)^{-1}{\cal L}_{\bm a}P\right\}\Big|_{p}.
\label{A2:3}
\end{equation}
Then we consider the divergence 
$\nabla\cdot\bm Y=0$ of Eq.~\eqref{energy:ei}, where $\bm Y$ is given by the projection $\perp$
of the term contracted with $\bm e_i$ in Eq.~\eqref{energy:ei} 
onto the direction orthogonal to $\bar{\bm\eta}$. 
At the point $p$, we can express $\bm Y$ in terms of
the exterior derivative $\dd_S$ on $S$ 
instead of the projection $\perp$.
Noting that $\bm Y\cdot\bm a=0$, from Eq.~\eqref{DY}, we obtain 
\begin{equation}
\bm D\cdot\left\{\dd_{S}\left(\ln\sqrt{|\bm\xi\cdot\bm\xi|}\right)+2v\,\bm b+v^{2}\bm a+(nh)^{-1}\left(1-v^{2}\right)\dd_{S}P\right\}\Big|_{p}=0.
\label{tang4}
\end{equation}
Eq.~\eqref{tang4} includes the term $\bm D\cdot\bm a$, and 
Eq.~\eqref{energy:ei} 
gives the expression of $\bm a$ in terms of the thermodynamic variables and the other part of the tensor $\bm B$.
Therefore, we can completely eliminate $\bm a$ from the expression Eq.~\eqref{A2:3} of $A_{2}$ by applying Eqs.~\eqref{energy:ei} and \eqref{tang4} to Eq.~\eqref{A2:3} at this point although the expression of $A_{2}$ gets very messy.

Finally, we take all the assumptions~\eqref{eta:m} and \eqref{dSPsv} into account to rewrite $A_{2}$.
The third term of Eq.~\eqref{A2:3} vanishes from Eq.~\eqref{dSPsv}, and evaluating $\bm b$ at $p$ gives $\bm b|_{p}=0$ from Eq.~\eqref{eta:m}.
Therefore, Eq.~\eqref{A2:3} reduces to
\begin{equation}
A_{2}=\frac{1}{d}\,\bm D\cdot\bm a\,\Big|_{p}.
\end{equation}
Eq.~\eqref{tang4} is deformed to
\begin{equation}
\left\{\Delta_{S}\left(\ln\sqrt{|\bm\xi\cdot\bm\xi|}\right)+\frac{2}{\sqrt{d}}\,\bm D\cdot\bm b+\frac{1}{d}\,\bm D\cdot\bm a+\left(1-\frac{1}{d}\right)(nh)^{-1}\Delta_{S}P\right\}\Big|_{p}=0.
\label{tang4:2}
\end{equation}
where $\Delta_{S}$ is the Laplace-Beltrami operator on $(S,\bm h)$.
We calculate the second term of Eq.~\eqref{tang4:2} by substituting $\bm Y=\bm b$ into Eq.~\eqref{DY}:
\begin{align}
\bm D\cdot\bm b\,|_{p}&=\nabla\cdot\bm b\,|_{p}\notag\\
&=\nabla\cdot\sum_{i}\bm B(\bar{\bm\xi},\bm e_{i})\bm e_{i}\,\Big|_{p}\notag\\
&=\nabla_{\mu}\Big\{(\nabla_{\nu}\bar{\eta}^{\mu})\bar{\xi}^{\nu}+\bm B(\bar{\bm\xi},\bar{\bm\xi})\bar{\xi}^{\mu}\Big\}\Big|_{p}\notag\\
&=[\nabla_{\mu},\nabla_{\nu}]\bar{\eta}^{\mu}~\bar{\xi}^{\nu}\notag\\
&={\rm Ric}(\bar{\bm\eta},\bar{\bm\xi})|_{p}\notag\\
&={\rm Ric}(\bar{\bm m},\bar{\bm\xi})|_{p}.
\end{align}
Applying Eq.~\eqref{tang4:2} to Eq.~\eqref{A2:3}, we obtain the expression~\eqref{A2:prop} of $A_{2}$ in terms of the proper section  $S$.


\end{document}